\documentclass[11pt,conference,letterpaper,onecolumn]{IEEEtran}
\IEEEoverridecommandlockouts

\usepackage[T1]{fontenc}
\usepackage[utf8]{inputenc}
\usepackage{amsmath,amssymb,amsthm,mathtools}
\usepackage{wrapfig}
\usepackage{mathrsfs}
\usepackage{bm}
\usepackage{cite}
\usepackage{xspace}
\usepackage{color,xcolor}
\usepackage[top=1in,bottom=1in,left=0.75in,right=0.75in]{geometry}
\usepackage{microtype}
\definecolor{darkgreen}{rgb}{0,0.45,0}

\allowdisplaybreaks[2]

\makeatletter
\renewenvironment{abstract}{%
  \normalfont
  \par\addvspace{0.5\baselineskip}%
  \begin{center}%
    \@IEEEabskeysecsize\textbf{\abstractname}%
  \end{center}%
  \@IEEEabskeysecsize\noindent\ignorespaces
}{%
  \par\addvspace{0.5\baselineskip}%
  \normalfont\normalsize
}
\makeatother

\newtheorem{theorem}{Theorem}
\newtheorem{proposition}{Proposition}
\newtheorem{corollary}{Corollary}
\newtheorem{lemma}{Lemma}

\newtheorem{ex}{Ex}

\newcommand{\define}{\triangleq}
\newcommand{\defeq}{\mathrel{:=}}
\newcommand{\Ber}{\mathrm{Bern}}

\newcommand{\ket}[1]{|#1\rangle}

\newcommand{\ketbra}[2]{|#1\rangle\langle #2|}
\newcommand{\CalH}{\mathcal{H}}
\newcommand{\CalS}{\mathcal{S}}
\newcommand{\CalX}{\mathcal{X}}
\newcommand{\CalU}{\mathcal{U}}

\newcommand{\ScrR}{\mathscr{R}}

\newcommand{\hTwo}{h_2}
\newcommand{\Ced}{C_{\mathrm{ED}}}
\newcommand{\Cnc}{C_{\mathrm{nc}}^{\mathrm{sl}}}
\newcommand{\Cc}{C_{\mathrm{c}}}
\newcommand{\Cur}{\ScrR_{\mathrm{u,red}}}

\newcommand{\artanh}{\operatorname{artanh}}
\def\ttp{\mathtt{p}}
\def\olinekappa{\overline{\kappa}}
\def\complex{\mathbb{C}}
\def\ulineCalX{\underline{\CalX}}

\def\parsec{\par\noindent}
\def\med{\medskip\parsec}

\begin{document}

\title{\huge{Rate Loss in Quantum Channels with Classical State and Applications for Quantum Broadcast Channels}}

\author{%
  \IEEEauthorblockN{Igor BERNARD and Arun PADAKANDLA}
  \IEEEauthorblockA{EURECOM, France}
}

\maketitle

\begin{abstract}
We consider the problem of \textit{rate loss} - a strict penalty suffered in achievable rates due to the lack of channel state information at the receiver (Rx) of a classical-quantum (CQ) channel. First, we identify non-commutative CQ channels and analytically prove a rate loss. Building on this, we next prove that coset-code-based strategies can strictly outperform conventional unstructured IID-code-based strategies for non-commutative 3-user CQ broadcast channels.
\end{abstract}

\section{Introduction}
\label{Sec:Introduction}
The presence of auxiliary random variables (RVs) is ubiquitous in the characterization of rate regions in classical and quantum network information theory \cite{BkNITElGamalKim_2011,201512TIT_SavWil,202102SAD_Sen}. In essence, specification of inner and outer bounds requires several parameters. The auxiliary RVs, in addition to facilitating these parameters, provide an operational meaning that guides intuition in the design of coding strategies and outer bound techniques. However, the flip side is that these auxiliary RVs either leave the rate-region characterization without a compact description or render the resulting optimization problem intractable or infinite-dimensional. All of this makes it extremely hard to compare different inner bounds and definitively contrast the performance of different coding strategies.

\begin{wrapfigure}{l}{0.43\textwidth} \vspace{-0.15in} \centering \includegraphics[width=0.41\textwidth]{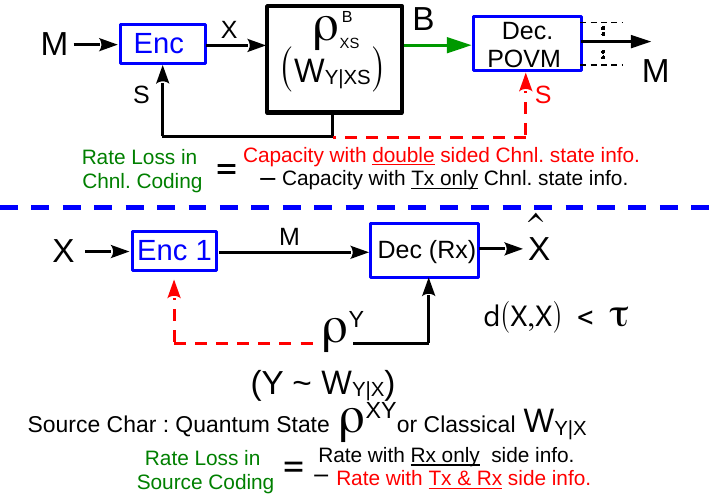} \vspace{-0.12in} \caption{A general CQ point-to-point (PTP) channel with Tx state information (QSTx). If the states $(\rho_{x,s}: (x,s) \in \CalX\times \CalS)$ commute, this reduces to a classical PTP channel with Tx state information \cite{1980MMPCIT_GelPin}. Bottom: PTP Rate-distortion with side info. Source could be classical-quantum (CQ) ($\rho^{XY}$) with classical system $X$ or purely classical with joint PMF $~\mathbb{W}_{XY}$.} \label{Fig:RateLoss} \vspace{-0.15in} \end{wrapfigure}

Among a handful of known techniques to address this difficulty, the notion of \textit{rate loss} (Fig.~\ref{Fig:RateLoss}) has proven to be quite useful in both source and channel coding. Consider a point-to-point (PTP) rate distortion problem wherein the receiver (Rx) has access to side information (SI) \cite{197601TIT_WynZiv}. Wyner and Ziv proved \cite[Sec.~II]{197601TIT_WynZiv} that \textit{unavailability} of SI at the Tx can result in a strict \textit{rate loss} \cite{199606TIT_Zam}. Wagner, Kelly and Altug \cite{201107TIT_WagKelAlt} crucially exploited Wyner and Ziv's rate loss to prove a consequential statement - sub-optimality of the quantize and bin (or Berger-Tung) strategy for the distributed rate distortion problem.

The dual notion of \textit{rate loss} in channel coding - the scenario of interest here, has also proven to be very useful. Consider communicating over a PTP channel whose transition depends on a random parameter \cite{1980MMPCIT_GelPin,201604JPA_BocCaiNot} referred to as a state (Fig.~\ref{Fig:RateLoss}). Dual to the above source coding problem, \textit{unavailability} of the state at the Rx can result in a \textit{rate loss} \cite{BkNITElGamalKim_2011}, i.e., lower achievable rates. Alternately stated, a state-cognizant Rx can squeeze out higher rates. A \textit{rate loss} statement in channel coding has enabled us to establish structural characterizations of optimal coding strategies. The need for a base layer with a common code for communicating statistically independent messages over a broadcast channel \cite{197905TIT_Mar,BkNITElGamalKim_2011} crucially relies on a rate loss in channel coding. The sub-optimality of conventional unstructured random codes for communicating over a multiple access channel with states is also proven \cite{200906TIT_PhiZam} via the rate loss.

While there have been several works \cite{199606TIT_Zam,200906TIT_PhiZam,200804TIT_CohZam,200408DIMACS_CohZam} for classical scenarios proving rate loss in both source and channel coding, characterization of a rate loss in truly noncommutative quantum channels (q-channels) remains unaddressed to the best of our knowledge. Our first finding (Prop.~\ref{Prop:QSTXStrictRateLoss}, Figs.~\ref{Fig:FigPlotRateLossWrtState}, \ref{Fig:FigPlotRateLossWrtCost}) fills this gap by identifying q-channels with classical state information over which bit communication entails a rate loss. Let us elaborate. {In Sec.~\ref{Sec:CQPTPSTx}, we focus on the PTP classical-quantum (CQ) channels with classical state information at the transmitter (Tx) (QSTx) (Fig.~\ref{Fig:RateLoss}). Boche, Cai, and Nötzel~\cite{201604JPA_BocCaiNot} characterized the capacity of QSTx when state information is available causally or non-causally. In particular, they provide a single-letter expression for the causal case. Leveraging the diagonal states as in the BB84 protocol, we prove a rate loss in the causal case. In the non-causal case, we show that the single-letter inner bound coincides with the causal capacity, while remaining strictly smaller than the capacity of the corresponding setting in which the Rx also has access to the channel state information.}

\begin{wrapfigure}{l}{0.43\textwidth} \vspace{-0.12in} \centering \includegraphics[width=0.41\textwidth]{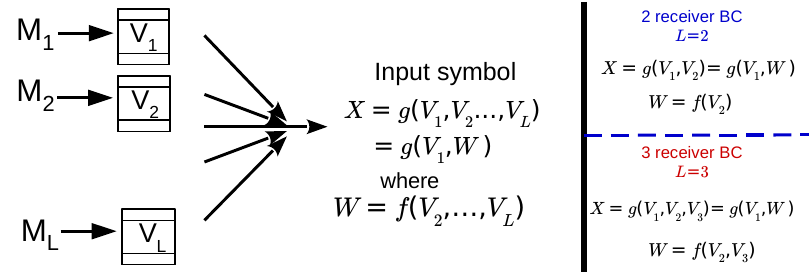} \vspace{-0.10in} \caption{\!\!\!From Rx $1$'s perspective, the interference $W$ it experiences is a specific combination of the other user codewords. While on $2$-user BC, $W$ is a univariate func.~of $V_{2}$, on a $3$-user BC, $W$ is a bivariate func.~of $V_{2},V_{3}$.} \label{Fig:InterferencOn3BC} \vspace{-0.15in} \end{wrapfigure}

Next, we build on the above findings in the context of a CQ broadcast channel (CQBC) \cite{201512TIT_SavWil,202606ISIT_GouPad,202503arXiv_GouPad}.  Let's briefly recall communication over a broadcast channel (BC). To communicate multiple bit streams to different Rxs over a BC, the Tx has to fuse chosen codewords from different codes through a single channel input (Fig.~\ref{Fig:InterferencOn3BC}). From the perspective of any Rx, a specific aggregation (denoted $W$ in Fig.~\ref{Fig:InterferencOn3BC}) of the codewords chosen for the other Rxs acts as interference. How does one mitigate interference? The two interference mitigation techniques \cite{BkNITElGamalKim_2011,201512TIT_SavWil} we are aware of are (i) precoding by Tx via Marton's binning \cite{197905TIT_Mar} and (ii) (partial or full) interference decoding by the Rx \cite{198101TIT_HanKob}. Precoding - the same technique employed by Tx in Fig.~\ref{Fig:RateLoss} - yields rate corresponding to that achievable in Fig.~\ref{Fig:RateLoss} with \underline{Tx only} state information, while the latter yields rates corresponding to Tx \underline{and} Rx state information. The presence of a \textit{rate loss} brings in an \underline{asymmetry} between these two techniques. An efficient coding strategy must therefore enable Rxs to decode as \textit{large} an interference component as possible, thereby suppressing the residual component precoded for by the Tx.

Let's focus these ideas for a $3$-user CQBC. In contrast to communication over a $2-$user BC, each Rx has to contend with \textit{two interferers}. The interference $W$ on a BC with $3$ Rxs is a \textit{bivariate} function of $V_{2},V_{3}$ - the signals of the other Rxs (Fig.~\ref{Fig:InterferencOn3BC}).  If the \textit{bivariate} function of $V_{2},V_{3}$ is strictly compressive, it is possible for Rxs over certain 3-CQBCs to be unable to decode the pair $V_{2},V_{3}$ even while being able to decode this bivariate function, since the latter is of a lower entropy. Over such $3$-CQBCs, a coding strategy \textit{unable} to decode bivariate functions effectively will precode for a larger component of the undecoded interference and suffer a rate loss. This connection leads us to our findings of Sec.~\ref{Sec:Broadcast}. In \cite{202606ISIT_GouPad,202503arXiv_GouPad}, a coset code based strategy that is specifically tailored to efficiently decode bivariate interference components efficiently is presented. Leveraging Ex.~\ref{Ex:QSTxExample} of Sec.~\ref{Sec:CQPTPSTx} as a central component and identifying a \textit{non-commutative} $3-$CQBC, we prove the coset code based strategy \cite{202606ISIT_GouPad,202503arXivBC_GouPad} achieves rate triples that are not achievable via any conventional unstructured IID code based strategy - a statement not established in \cite{202606ISIT_GouPad,202503arXivBC_GouPad}.

Our proof of the rate loss for QSTx in Ex.~\ref{Ex:QSTxExample} employs a convexity argument which does not rely on cardinality bounds for the auxiliary RV. The more involved step, needed for the broadcast-channel comparison, is to specialize the auxiliary RVs appearing in the unstructured IID code based inner bound \cite[Thm.~2]{201804TIT_PadPra}. This specialization is carried out in Sec.~\ref{Sec:Broadcast}.

\med\noindent\textbf{Notation.} We supplement standard notation with the following. All Hilbert spaces are finite-dimensional. For a Hilbert space $\CalH$, $\mathcal{L}(\CalH)$, $\mathcal{P}(\CalH)$ and $\mathcal{D}(\CalH)$ denote linear, positive and density operators acting on $\CalH$. For a finite set $\CalX$, $\mathcal{P}(\CalX)$ denotes the collection of PMFs on $\CalX$. For $p\in[0,1]$, $\hTwo(p)\define -p\log_2 p-(1-p)\log_2(1-p)$ denotes binary entropy and $p*q\define p(1-q)+(1-p)q$ denotes binary convolution. The Pauli matrices and diagonal states are denoted by
\[
X\!=\begin{pmatrix}\!0\!&1\\1\!&0\!\end{pmatrix}\!,
Y\!=\begin{pmatrix}\!0\!&-i\\ i\!&0\!\end{pmatrix}\!,
Z\!=\begin{pmatrix}\!1\!&0\\0\!&-1\!\end{pmatrix}\!, \ket{\pm}=\frac{\ket{0}\pm\ket{1}}{\sqrt{2}}.
\]
\section{{Rate Loss in CQ Channel with Classical State}}
\label{Sec:CQPTPSTx}

We begin with a formal description of a QSTx. Consider a (generic) \textit{QSTx} specified through (i) a finite input set $\CalX$, (ii) a finite set $\CalS$ of states, (iii) a PMF $\ttp_{S}(\cdot)$ on $\CalS$, (iv) a collection $(\rho_{xs} \in \mathcal{D}(\mathcal{H}): (x,s) \in \CalX\times \CalS )$ of density operators and (v) cost function $\kappa :\CalX \times \CalS \rightarrow [0,\infty)$. The cost function is additive, i.e., having observed the state sequence $s^{n}$ the cost incurred by the sender in preparing the state $\otimes_{t=1}^{n}\rho_{x_{t}s_{t}}$ is $\olinekappa(x^{n},s^{n}) \define \frac{1}{n}\sum_{t=1}^{n}\kappa(x_{t},s_{t})$. The (classical) state process evolves independently and identically across time. The Tx knows the state either causally or non-causally, while the decoder remains ignorant. {A compact characterization of the currently known inner bounds follows.}

\begin{theorem}[QSTx capacities \cite{201604JPA_BocCaiNot}]
\label{Thm:QSTxKnownCapacities}
Consider a QSTx $(\CalX,\CalS,\ttp_S,\{\rho_{x,s}\}_{(x,s)},\kappa)$ as above, and fix a cost level $\tau$. Let $B$ denote the quantum output. If the state is known causally at the Tx and remains unknown at the Rx, then
\begin{equation}
\label{Eqn:CausalQSTxCapacity}
\Cc(\tau)
=
\max_{\substack{\ttp_U,~x=f(u,s):\\
\mathbb{E}\{\kappa(f(U,S),S)\}\leq \tau}}
\chi(U;B)_{\sigma}.
\end{equation}
where $U$ is independent of $S\sim \ttp_S$ and
\begin{equation}
\label{Eqn:CausalQSTxState}
\sigma^{UB}
=
\sum_{u}\ttp_U(u)\ketbra{u}{u}
\otimes
\sum_{s}\ttp_S(s)\rho_{f(u,s),s}.
\end{equation}
Moreover, one may restrict $|\CalU|\leq |\CalX||\CalS|$.

If the state is known non-causally at the Tx and remains unknown at the Rx, then
\begin{equation}
\label{Eqn:NonCausalQSTxCapacity}
C_{\mathrm{nc}}(\tau)
=
\lim_{n\rightarrow\infty}\frac{1}{n}
\max_{\substack{\ttp_{U^n|S^n},\,f}}
\bigl[
\chi(U^n;B^n)_{\sigma}
-
I(U^n;S^n)
\bigr],
\end{equation}
where the maximum is over conditional PMFs $\ttp_{U^n|S^n}$ and deterministic maps
$f:\CalU^n\times\CalS^n\to\CalX^n$ such that $\mathbb{E}\{\olinekappa(f(U^n,S^n),S^n)\}\leq \tau$. The state $\sigma^{U^nB^n}$ appearing above is given by
\begin{equation}
\label{Eqn:NonCausalQSTxState}
\begin{aligned}
\sigma^{U^nB^n}
&=
\sum_{u^n}\ttp_{U^n}(u^n)
\ketbra{u^n}{u^n}
\otimes \sigma_{u^n}^{B^n},\\
\sigma_{u^n}^{B^n}
&\define
\sum_{s^n}
\ttp_{S^n|U^n}(s^n|u^n)
\bigotimes_{t=1}^{n}
\rho_{f_t(u^n,s^n),s_t}.
\end{aligned}
\end{equation}
\end{theorem}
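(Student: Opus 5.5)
The plan has two halves: for each of the causal formula \eqref{Eqn:CausalQSTxCapacity} and the non-causal formula \eqref{Eqn:NonCausalQSTxCapacity}, prove achievability and a matching converse. Cost constraints are carried throughout by expurgation and averaging.

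\emph{Causal case.} For achievability, reduce to a state-free CQ channel via Shannon strategies. Fix $\ttp_U$ and $f$, and view each $u\in\CalU$ as an input letter of the CQ channel $u\mapsto \sum_s\ttp_S(s)\rho_{f(u,s),s}$. Because $S$ is IID and independent of the codeword, the Tx sends $x_t=f(u_t(m),s_t)$ causally, and the Rx sees exactly the product state $\bigotimes_t\sigma^B_{u_t}$. The HSW theorem then yields every rate below $\chi(U;B)_\sigma$. For the cost, use a law-of-large-numbers argument on $\olinekappa$ under IID codebooks, together with the usual slack $\tau+\delta$ and continuity. For the converse, let $M$ be uniform and set $U_t=(M,S^{t-1})$. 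Then $U_t$ is independent of $S_t$ and $X_t=f_t(U_t,S_t)$. Fano's inequality gives $nR\le I(M;B^n)+n\epsilon_n$. Expand
\[
I(M;B^n)\le\sum_t I(M;B_t|B^{t-1})\le\sum_t I(M,B^{t-1};B_t).
\]
The main difficulty is that $B^{t-1}$ is quantum and cannot be placed in a classical auxiliary. To get around this, instead bound $\sum_t I(M,S^{t-1};B_t)$ via $I(M;B^n)\le I(M;B^n S^n)$. The chain rule for $I(M;B^nS^n)$ with $S$ IID gives terms $I(MS^{t-1};B_t\,|\,B^{t-1}\ldots)$, and the difficulty is again the conditioning on $B^{t-1}$. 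I would handle it with the standard trick from Boche--Cai--N\"otzel: since $B_t$ depends only on $(M,S^{t})$ and the $B$-systems are conditionally independent given $(M,S^n)$, we have $I(M;B^n)\le \sum_t I(MS^{t-1};B_t)$, using subadditivity of conditional entropy, $H(B^n|MS^n)=\sum_t H(B_t|MS^t)$, and $H(B^n)\le\sum_tH(B_t)$. Finally, time sharing with a uniform $T$ folded into $U$, together with concavity of $\max\chi$ in the cost, gives the single-letter bound. Carathéodory (Fenchel--Eggleston) then yields $|\CalU|\le|\CalX||\CalS|$ after noting that one can restrict to deterministic $f$.

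\emph{Non-causal case.} For achievability, apply Gel'fand--Pinsker binning to $n$-blocks. Treat $(\CalU^n,\CalS^n,\CalX^n)$ as super-letters, generate a $U^{n}$-codebook of size $2^{k(R+R')}$ over $k$ super-letters, and choose $R'>I(U^n;S^n)$ so that the covering lemma finds a codeword jointly typical with the state. Decoding uses a quantum packing lemma (sequential or pretty-good measurement) requiring $R+R'<\chi(U^n;B^n)$, and dividing by $n$ gives the limit. For the converse, use Fano and the quantum analogue of the Csisz\'ar sum identity with $U^n=M$ (trivial multi-letter auxiliary: take $U^n=M$, giving $I(M;B^n)=I(M;B^n)-I(M;S^n)$ since $M\perp S^n$). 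This is essentially immediate, because the regularized expression allows $U^n=M$ directly.

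I expect the main obstacle to be the causal converse step that decouples the quantum past $B^{t-1}$. I would first try the entropy-subadditivity argument above.
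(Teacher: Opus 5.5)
The paper does not prove Theorem~\ref{Thm:QSTxKnownCapacities}; it imports it from \cite{201604JPA_BocCaiNot}, so your proposal has to stand on its own. Most of it does, but the causal converse has a genuine gap. From $H(B^n|MS^n)=\sum_t H(B_t|MS^t)$ and $H(B^n)\le\sum_t H(B_t)$, what actually follows is
\[
I(M;B^n)\le I(MS^n;B^n)\le\sum_t I(MS^{t};B_t)=\sum_t\bigl[I(MS^{t-1};B_t)+I(S_t;B_t|MS^{t-1})\bigr].
\]
This is not $\sum_t I(MS^{t-1};B_t)$. Conditioning on all of $S^n$ also conditions on the current state $S_t$, and the extra term is in general strictly positive. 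The single-letter bound you would obtain is $\max I(US;B)$ over $U\perp S$, $X=f(U,S)$, i.e.\ a bound in which the receiver effectively decodes the state. Your other suggestion, $I(M;B^n)\le I(M;B^nS^n)=I(M;B^n|S^n)$, hands the state to the receiver and at best yields the state-informed capacity. Neither bound can be tight, and this is exactly the rate-loss phenomenon the paper studies. Take Ex.~\ref{Ex:QSTxExample} with $\delta=0$ and $q=\tau=\frac12$. Prop.~\ref{Prop:QSTxCapacityExactExp} gives $\Cc=1-\hTwo\bigl(\frac{1+1/\sqrt2}{2}\bigr)\approx 0.40$. Both of your bounds equal $1=\Ced$: take $U\sim\Ber(\frac12)$ and $X=U$, and the four BB84 states average to $I/2$.

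The repair is to condition only on past states. Keep $H(B^n)\le\sum_t H(B_t)$ and lower-bound
\[
H(B^n|M)=\sum_t H(B_t|MB^{t-1})\ge\sum_t H(B_t|MS^{t-1}B^{t-1})=\sum_t H(B_t|MS^{t-1}).
\]
The inequality is strong subadditivity. The equality holds because, given $(M,S^{t-1})=(m,s^{t-1})$, the state of $B^{t-1}B_t$ is the product $\bigotimes_{i<t}\rho_{x_i(m,s^i),s_i}\otimes\sum_{s_t}\ttp_S(s_t)\rho_{x_t(m,s^{t-1},s_t),s_t}$, since $S_t\perp(M,S^{t-1})$. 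This yields $I(M;B^n)\le\sum_t\chi(U_t;B_t)$ with $U_t=(M,S^{t-1})\perp S_t$. Equivalently, the classical chain $I(M;B_t|B^{t-1})\le I(MS^{t-1}B^{t-1};B_t)=I(MS^{t-1};B_t)$ goes through verbatim, because $I(B^{t-1};B_t|MS^{t-1})=0$. So the quantum past is not really the obstacle; conditioning on current and future states is.

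The rest is fine up to detail. The non-causal converse is indeed immediate with $U^n=M$. No Csisz\'ar sum identity is needed, but you should invoke superadditivity and Fekete's lemma so that the limit exists. For $|\CalU|\le|\CalX||\CalS|$, preserve $\Pr[f(U,s)=x]$ for all $(x,s)$, which fixes both the average output state and the cost, together with $\sum_u\ttp_U(u)H(\sigma_u)$. Preserving the average output state directly would only give a $\dim\CalH$-dependent bound.
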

{The goal of the next section is to demonstrate that Rx being oblivious to the state results in strict penalties for non-commutative QSTx.}

\subsection{\textbf{A Noisy BB84-like QSTx}}
We now describe the QSTx of interest. Let the state and input alphabets be binary, and let the output states live in a qubit space, i.e. $\CalS=\CalX=\{0,1\}$ and $\CalH = \complex^{2}$. Starting from the BB84 \cite{198412ICCSSP_BenBra,201412TCS_BenBra} pure states and $\delta \in [0,\frac{1}{2}]$,
\begin{equation*}
\rho_{0,0}=\ketbra{0}{0},\qquad
\rho_{1,0}=\ketbra{1}{1},\qquad
\rho_{0,1}=\ketbra{+}{+},\qquad
\rho_{1,1}=\ketbra{-}{-}.
\end{equation*}
we define the noisy output states of the QSTx to be
\begin{equation}
\label{Eqn:NoisyStates}
\bar\rho^{(\delta)}_{x,s}
\define
(1-\delta)\rho_{x,s}+\delta\rho_{1-x,s},
\mbox{ for } (x,s)\in\{0,1\}^2.
\end{equation}
Input is Hamming cost constrained, i.e., $\mathbb{E}\{X\} \leq \tau$ where $\tau\in\left[0,\frac{1}{2}\right]$. A formal description is provided for ease of reference.
\begin{ex}
\label{Ex:QSTxExample}
Let $\CalS=\CalX=\{0,1\}$, $\CalH = \complex^{2}$, $S \sim \Ber(q)$, i.e., $\ttp_{S}(1)=q$ with $q \in [0,1]$. Referring to the above notation, we fix $\delta \in [0,\frac{1}{2}]$ and let $\bar\rho^{(\delta)}_{x,s} \define (1-\delta)\rho_{x,s}+\delta\rho_{1-x,s}$, with $(x,s)\in\{0,1\}^2$. The input is Hamming cost constrained, i.e., $\mathbb{E}\{X\} \leq \tau$ where $\tau\in\left[0,\frac{1}{2}\right]$.
\end{ex}

Firstly, note that if the state is available both at the Tx and Rx, then for each fixed $s$ the channel reduces to a binary symmetric channel of crossover probability $\delta$. Hence $
\Ced(\tau,\delta)=\hTwo(\tau*\delta)-\hTwo(\delta)$.

In our first finding, stated below, we provide an exact characterization of the natural single-letter non-causal rate for this channel, namely the \(n=1\) specialization of the general \(n\)-letter non-causal formula when the Rx is oblivious to the state, and show that in this example it coincides with the causal capacity. Toward that end, we define
\begin{align}
\label{Eqn:QSTxDefnR}
r(q)
&\define
\sqrt{q^2+(1-q)^2},\\
\label{Eqn:QSTxDefnDStar}
d_*(q,\tau)
&\define
\min\left\{
\sqrt{A^2+B^2}:
|A|\le q,\ |B|\le 1-q,\ A+B\ge 1-2\tau
\right\}.
\end{align}

We consider the one-letter non-causal Gelfand--Pinsker candidate
\begin{equation}
\label{Eqn:OneLetterCandidateGeneralq}
C_{\mathrm{nc}}^{\mathrm{sl}}(q,\tau,\delta)
\defeq
\max_{\substack{p(u|s),\,x(u,s):\\ \mathbb E[X]\le \tau}}
\bigl[\chi(U;B)-I(U;S)\bigr].
\end{equation}

\begin{proposition}
\label{Prop:QSTxCapacityExactExp} Consider Ex.~\ref{Ex:QSTxExample}.
For every $q\in[0,1]$, $\tau\in[0,1/2]$, and $\delta\in[0,1/2]$,
\begin{equation}
\label{Eqn:MainFormula}
\begin{aligned}
\Cc(q,\tau,\delta)=\Cnc(q,\tau,\delta)=
\hTwo\!\left(\frac{1+(1-2\delta)d_*(q,\tau)}{2}\right)-
\hTwo\!\left(\frac{1+(1-2\delta)r(q)}{2}\right),
\end{aligned}
\end{equation}
where $r(\cdot)$ and $d_{*}(\cdot,\cdot)$ are as in \eqref{Eqn:QSTxDefnR} and \eqref{Eqn:QSTxDefnDStar}. Moreover,
\begin{equation*}
d_*(q,\tau)=
\begin{cases}
\sqrt{q^2+(1-q-2\tau)^2},
& q\le \dfrac12,\ \tau<\dfrac12-q,\\[1.1ex]
\dfrac{1-2\tau}{\sqrt2},
& \left|q-\dfrac12\right|\le \tau\le \dfrac12,\\[1.1ex]
\sqrt{(q-2\tau)^2+(1-q)^2},
& q\ge \dfrac12,\ \tau<q-\dfrac12.
\end{cases}
\end{equation*}
\end{proposition}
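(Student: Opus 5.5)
The plan is to write every output state in Bloch coordinates, in the $(x,z)$-plane. The noiseless states $\rho_{x,0}$ have Bloch vectors $(0,\pm1)$ and $\rho_{x,1}$ have $(\pm1,0)$, with sign $+$ for $x=0$ and $-$ for $x=1$. By \eqref{Eqn:NoisyStates}, $\bar\rho^{(\delta)}_{x,s}$ has the same Bloch vector scaled by $(1-2\delta)$. A qubit state with Bloch vector of length $t$ has entropy $g(t)\define\hTwo\bigl(\tfrac{1+t}{2}\bigr)$, and $g$ is decreasing in $t$.

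Fix a one-letter non-causal strategy $(p(u|s),x(u,s))$ and put $q_u\define \ttp_{S|U}(1|u)$. For each $u$, write $a_u\define(-1)^{x(u,1)}$ and $b_u\define(-1)^{x(u,0)}$. The conditional output $\sigma_u^B$ then has Bloch vector $(1-2\delta)\bigl(q_u a_u,(1-q_u)b_u\bigr)$, whose length is $(1-2\delta)r(q_u)$. Hence
\[
H(B|U)=\mathbb{E}\bigl[g\bigl((1-2\delta)r(q_U)\bigr)\bigr].
\]
The average output has Bloch vector $(1-2\delta)(A,B)$, where $A\define\mathbb{E}[q_Ua_U]\in[-q,q]$ and $B\define\mathbb{E}[(1-q_U)b_U]\in[-(1-q),1-q]$. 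The Hamming cost is $\mathbb{E}[X]=\tfrac{q-A}{2}+\tfrac{(1-q)-B}{2}$, so the constraint $\mathbb{E}[X]\le\tau$ is exactly $A+B\ge 1-2\tau$. Since $g$ is decreasing, this gives
\[
H(B)\le g\bigl((1-2\delta)d_*(q,\tau)\bigr).
\]

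For the causal case, $U$ is independent of $S$, so $q_u=q$ for every $u$. Then $H(B|U)=g((1-2\delta)r(q))$ exactly, and the upper bound on $H(B)$ above holds. For achievability, a four-valued $U$ that mixes the four maps $s\mapsto x$ realizes every pair $(A/q,B/(1-q))\in[-1,1]^2$, so the minimizer of \eqref{Eqn:QSTxDefnDStar} is attained. This yields $\Cc$ equal to the right side of \eqref{Eqn:MainFormula}, by Theorem~\ref{Thm:QSTxKnownCapacities}.

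Next, $\Cc\le\Cnc$, because causal strategies are a special case of the one-letter Gelfand--Pinsker candidate: take $p(u|s)=\ttp_U(u)$, which gives $I(U;S)=0$. For the converse $\Cnc\le\Cc$, I use $I(U;S)=\hTwo(q)-\mathbb{E}[\hTwo(q_U)]$ together with $\mathbb{E}[q_U]=q$. It then suffices to show
\[
\mathbb{E}[\varphi(q_U)]\ge\varphi(q),\qquad \varphi(p)\define g\bigl((1-2\delta)r(p)\bigr)-\hTwo(p).
\]
By Jensen's inequality, this follows from convexity of $\varphi$ on $[0,1]$, which is the main obstacle. I would first try a direct second-derivative computation. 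Writing $c=1-2\delta$, the term $-\hTwo(p)$ contributes $\tfrac{1}{p(1-p)\ln 2}$. The term $g(cr(p))$ contributes $g''(cr)c^2r'^2+g'(cr)c\,r''$, with $r''=\tfrac{(1-2p)^2+\ldots}{r^3}$ computable explicitly and $g'$, $g''$ given in closed form via $\artanh$. One must show that the curvature of $-\hTwo$, which blows up at the endpoints and equals $4/\ln2$ at $p=\tfrac12$, dominates the possibly negative contributions, uniformly in $c\in[0,1]$. Using $|g'(t)|=\tfrac{1}{\ln 2}\artanh(t)\cdot\ldots$ and $r\ge 1/\sqrt2$ should keep these terms bounded. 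If a clean uniform bound fails, I would reduce to the worst case $c=1$, i.e.\ the pure case, by monotonicity in $c$, and check the one-variable inequality there. Combined with the bound on $H(B)$, which is unchanged because $A$ and $B$ still range over the same box with the same linear constraint, this gives $\Cnc\le\Cc$.

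Finally, the case formula for $d_*$ is the Euclidean distance from the origin to the box $[-q,q]\times[-(1-q),1-q]$ intersected with the half-plane $A+B\ge c'\define 1-2\tau\ge0$. The unconstrained projection onto the line $A+B=c'$ is $(c'/2,c'/2)$. This point is feasible iff $c'/2\le\min(q,1-q)$, i.e.\ $|q-\tfrac12|\le\tau$, and then it gives $(1-2\tau)/\sqrt2$. Otherwise the minimizer lies on the segment of the line inside the box at the end nearest the midpoint, where one coordinate is clamped. If $q<\tfrac12$ and $\tau<\tfrac12-q$, the minimizer is $A=q$, $B=c'-q$, giving $\sqrt{q^2+(1-q-2\tau)^2}$. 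The case $q>\tfrac12$ is symmetric, giving $\sqrt{(q-2\tau)^2+(1-q)^2}$. Convexity of the squared norm along the segment confirms that these points are the minimizers.
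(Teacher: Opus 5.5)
Your outline is correct and follows the paper's proof essentially step for step: the same Bloch parametrization with $A=\mathbb{E}[q_Ua_U]$, $B=\mathbb{E}[(1-q_U)b_U]$ and cost constraint $A+B\ge 1-2\tau$, the same decomposition $\chi(U;B)-I(U;S)=H(B)-\hTwo(q)-\mathbb{E}[\varphi(q_U)]$ closed by Jensen, an independent four-valued $U$ for achievability, and the same clamping argument for $d_*$. For $d_*$, add the one-line scaling argument that any minimizer lies on $A+B=1-2\tau$. The step you leave as a plan, convexity of your $\varphi$ (the paper's $g_\delta$, Lemma~\ref{Lem:ConvexityGDelta}), is settled in the paper by exactly your fallback rather than by a uniform bound, which would fail because the negative terms blow up at $p\in\{0,1\}$ when $\delta=0$: with $x=(1-2\delta)r$ one has $\ln 2\cdot\varphi''=\frac{2}{1-r^2}+r^{-4}H_r(x)$, where $H_r(x)=-\frac{x^2(2r^2-1)}{1-x^2}+\frac{x}{2}\ln\frac{1-x}{1+x}$ is decreasing in $x$ because $r\ge 1/\sqrt2$, so it suffices to take $\delta=0$, where nonnegativity reduces to $\frac{2r}{1-r^2}\ge\ln\frac{1+r}{1-r}$, true since the difference vanishes at $r=0$ and has derivative $\frac{4r^2}{(1-r^2)^2}\ge 0$.
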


\begin{proof}
In the first part, we consider a generic test channel and derive an upper bound on the achievable rate using the structure of the cq channel. In the second part, we identify a specific test channel achieving this upper bound.

We first write the channel outputs in Bloch form. Namely,
\begin{equation*}
\begin{gathered}
\bar\rho^{(\delta)}_{0,0}=\frac{I+(1-2\delta)Z}{2},\qquad
\bar\rho^{(\delta)}_{1,0}=\frac{I-(1-2\delta)Z}{2},\\
\bar\rho^{(\delta)}_{0,1}=\frac{I+(1-2\delta)X}{2},\qquad
\bar\rho^{(\delta)}_{1,1}=\frac{I-(1-2\delta)X}{2}.
\end{gathered}
\end{equation*}
Fix an arbitrary admissible test channel \(p(u|s)\) and a deterministic encoder \(x(u,s)\) satisfying \(\mathbb E[X]\le \tau\). For each \(u\), set
\[
q_u\defeq \Pr[S=1|U=u],
\qquad
\mathsf A_u\defeq 1-2x(u,1),
\qquad
\mathsf B_u\defeq 1-2x(u,0).
\]
Then \(\mathsf A_u,\mathsf B_u\in\{-1,+1\}\) and \(\sum_u p(u)q_u=q\). The conditional output state associated with \(U=u\) is
\begin{align}
\sigma_u
&\defeq
\sum_s p(s|u)\bar\rho^{(\delta)}_{x(u,s),s}
\nonumber\\
&=
\frac{1}{2}
\left(
I+(1-2\delta)
\left(\mathsf A_uq_uX+\mathsf B_u(1-q_u)Z\right)
\right).
\end{align}
Since \(XZ+ZX=0\), the square of
\[
\mathsf A_uq_uX+\mathsf B_u(1-q_u)Z
\]
is \(\bigl(q_u^2+(1-q_u)^2\bigr)I\). Hence the eigenvalues of \(\sigma_u\) are
\[
\frac{1\pm (1-2\delta)\sqrt{q_u^2+(1-q_u)^2}}{2},
\]
and therefore
\[
S(\sigma_u)
=
\hTwo\!\left(
\frac{1+(1-2\delta)\sqrt{q_u^2+(1-q_u)^2}}{2}
\right).
\]
Define
\[
g_\delta(t)
\defeq
\hTwo\!\left(
\frac{1+(1-2\delta)\sqrt{t^2+(1-t)^2}}{2}
\right)-\hTwo(t),
\qquad t\in[0,1].
\]
Since
\[
I(U;S)=\hTwo(q)-\sum_u p(u)\hTwo(q_u),
\]
we obtain
\begin{equation}
\label{Eqn:ProofGenericRateBound}
\chi(U;B)-I(U;S)
=
S(\bar\sigma)-\hTwo(q)-\sum_u p(u)g_\delta(q_u),
\end{equation}
where \(\bar\sigma\defeq \sum_u p(u)\sigma_u\). By Lemma~\ref{Lem:ConvexityGDelta} (see Appendix~\ref{App:AuxiliaryConvexityGeometry}), the function \(g_\delta\) is convex on \([0,1]\). Jensen's inequality then gives
\begin{equation}
\label{Eqn:JensenGDeltaUse}
\sum_u p(u)g_\delta(q_u)
\ge
g_\delta\!\left(\sum_u p(u)q_u\right)
=
g_\delta(q).
\end{equation}
It remains to upper-bound \(S(\bar\sigma)\). Set
\[
A\defeq \sum_u p(u)\mathsf A_uq_u,
\qquad
B\defeq \sum_u p(u)\mathsf B_u(1-q_u).
\]
Then
\[
\bar\sigma
=
\frac{1}{2}
\left(
I+(1-2\delta)(AX+BZ)
\right).
\]
As before, \((AX+BZ)^2=(A^2+B^2)I\), and hence
\[
S(\bar\sigma)
=
\hTwo\!\left(
\frac{1+(1-2\delta)\sqrt{A^2+B^2}}{2}
\right).
\]
We now relate \(A\) and \(B\) to the input-cost constraint. Since
\[
x(u,1)=\frac{1-\mathsf A_u}{2},
\qquad
x(u,0)=\frac{1-\mathsf B_u}{2},
\]
we have
\begin{align}
\mathbb E[X]
&=
\sum_u p(u)\bigl[q_u x(u,1)+(1-q_u)x(u,0)\bigr]
\nonumber\\
&=
\frac12
\sum_u p(u)\bigl(1-\mathsf A_uq_u-\mathsf B_u(1-q_u)\bigr)
\nonumber\\
&=
\frac12(1-A-B).
\end{align}
Thus \(\mathbb E[X]\le \tau\) implies \(A+B\ge 1-2\tau\). Moreover,
\[
|A|\le \sum_u p(u)q_u=q,
\qquad
|B|\le \sum_u p(u)(1-q_u)=1-q.
\]
Therefore \((A,B)\) belongs to the feasible set in the definition of \(d_*(q,\tau)\) in \eqref{Eqn:QSTxDefnDStar}, and consequently
\[
\sqrt{A^2+B^2}\ge d_*(q,\tau).
\]
Since the function \(x\mapsto \hTwo((1+(1-2\delta)x)/2)\) is decreasing on \([0,1]\), this yields
\begin{equation}
\label{Eqn:AverageEntropyDStarBound}
S(\bar\sigma)
\le
\hTwo\!\left(
\frac{1+(1-2\delta)d_*(q,\tau)}{2}
\right).
\end{equation}
Combining \eqref{Eqn:ProofGenericRateBound}, \eqref{Eqn:JensenGDeltaUse}, and \eqref{Eqn:AverageEntropyDStarBound}, and using the definition of \(g_\delta\), gives
\begin{align}
\chi(U;B)-I(U;S)
&\le
\hTwo\!\left(
\frac{1+(1-2\delta)d_*(q,\tau)}{2}
\right)
-\hTwo(q)-g_\delta(q)
\nonumber\\
&=
\hTwo\!\left(
\frac{1+(1-2\delta)d_*(q,\tau)}{2}
\right)
-
\hTwo\!\left(
\frac{1+(1-2\delta)r(q)}{2}
\right).
\end{align}
This proves the desired upper bound for every admissible non-causal test channel, and hence also for every causal test channel.

We now prove achievability. Let \((A^*,B^*)\) be a minimizer in the definition of \(d_*(q,\tau)\). For \(q\in(0,1)\), choose independent random signs \(\xi,\zeta\in\{-1,+1\}\) such that
\[
\mathbb E[\xi]=\frac{A^*}{q},
\qquad
\mathbb E[\zeta]=\frac{B^*}{1-q}.
\]
This is possible because \(|A^*|\le q\) and \(|B^*|\le 1-q\). Let \(U=(\xi,\zeta)\), independent of \(S\), and use the deterministic encoder
\[
x(U,1)=\frac{1-\xi}{2},
\qquad
x(U,0)=\frac{1-\zeta}{2}.
\]
The resulting average coefficients are exactly \(A^*\) and \(B^*\). Moreover, by the argument in Lemma~\ref{Lem:GeometryDStar}, the minimizing point satisfies \(A^*+B^*=1-2\tau\). Hence
\[
\mathbb E[X]=\frac12(1-A^*-B^*)=\tau.
\]
Since \(U\) is independent of \(S\), this test channel is already admissible for the causal formulation and \(I(U;S)=0\). For every value of \(U=(\xi,\zeta)\), the conditional output state has Bloch-vector norm
\[
(1-2\delta)\sqrt{q^2+(1-q)^2}=(1-2\delta)r(q),
\]
whereas the average output state has Bloch-vector norm
\[
(1-2\delta)\sqrt{(A^*)^2+(B^*)^2}=(1-2\delta)d_*(q,\tau).
\]
Therefore this causal test channel achieves
\[
\hTwo\!\left(
\frac{1+(1-2\delta)d_*(q,\tau)}{2}
\right)
-
\hTwo\!\left(
\frac{1+(1-2\delta)r(q)}{2}
\right).
\]
Together with the upper bound above and the inclusion of causal strategies among non-causal strategies, this proves
\[
\Cc(q,\tau,\delta)=\Cnc(q,\tau,\delta)
=
\hTwo\!\left(
\frac{1+(1-2\delta)d_*(q,\tau)}{2}
\right)
-
\hTwo\!\left(
\frac{1+(1-2\delta)r(q)}{2}
\right).
\]
The boundary cases \(q=0\) and \(q=1\) reduce to the ordinary binary symmetric channel with input-cost constraint, and the same expression is obtained by continuity. Finally, the explicit piecewise expression for \(d_*(q,\tau)\) is precisely Lemma~\ref{Lem:GeometryDStar} of Appendix~\ref{App:AuxiliaryConvexityGeometry}.
\end{proof}
A formal statement of rate loss for Ex.~\ref{Ex:QSTxExample} follows.

\begin{proposition}
\label{Prop:QSTXStrictRateLoss}
Assume $0<\tau\le \frac12$ and $0\le \delta<\frac12$. Then
\begin{equation}
\label{Eqn:StrictRateLossGeneralqWeak}
\Cnc(q,\tau,\delta)\le \Ced(\tau,\delta)
\qquad \text{for all }q\in[0,1].
\end{equation}
Moreover, as depicted in Fig.~\ref{Fig:FigPlotRateLossWrtState},\ref{Fig:FigPlotRateLossWrtCost}
\begin{equation}
\label{Eqn:StrictRateLossGeneralqStrict}
\Cnc(q,\tau,\delta)<\Ced(\tau,\delta)
\qquad \text{for every }q\in(0,1),
\end{equation}
whereas equality holds at the endpoints $q\in\{0,1\}$.
\end{proposition}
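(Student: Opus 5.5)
Using Prop.~\ref{Prop:QSTxCapacityExactExp}, the claim reduces to comparing two explicit expressions. Write $c\define 1-2\delta\in(0,1]$ and $\phi(x)\define \hTwo\bigl(\frac{1+cx}{2}\bigr)$ for $x\in[0,1]$. This function is strictly decreasing on $[0,1]$ and strictly concave in $x$; the latter holds because $\hTwo$ composed with an affine map is concave, and the composition is strictly concave since $c>0$. Note that $\phi(1-2p)=\hTwo\bigl(\frac{1+c(1-2p)}{2}\bigr)=\hTwo(\frac{1}{2}-\frac{c}{2}+cp)$. With $p=\tau$ this gives $\frac{1-c}{2}+c\tau=\delta+\tau(1-2\delta)=\tau*\delta$. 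Hence
\[
\Ced(\tau,\delta)=\phi(1-2\tau)-\phi(1),
\qquad
\Cnc(q,\tau,\delta)=\phi\bigl(d_*(q,\tau)\bigr)-\phi\bigl(r(q)\bigr).
\]
So the task is to show $\phi(d_*)-\phi(r)\le \phi(1-2\tau)-\phi(1)$, with strict inequality for $q\in(0,1)$.

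\textbf{Key geometric facts.} First, $r(q)\le 1$, with equality iff $q\in\{0,1\}$. Second, $d_*(q,\tau)\ge (1-2\tau)/\sqrt2$, by Cauchy--Schwarz: $\sqrt{A^2+B^2}\ge (A+B)/\sqrt2$. Third, $d_*\le r$, since $(A,B)=(q,1-q)$ is feasible. Fourth, $d_*\le 1-2\tau$ holds at least at the endpoints. At $q=0$ we get $d_*=1-2\tau$ and $r=1$, so equality in \eqref{Eqn:StrictRateLossGeneralqWeak} is immediate there, and likewise at $q=1$.

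\textbf{Main step.} For general $q$ I would use a chord-slope (concavity) argument. Concavity of $\phi$ means the difference quotient $\frac{\phi(b)-\phi(a)}{b-a}$ is nonincreasing in both arguments. Writing $\phi(d_*)-\phi(r)=-(\text{slope on }[d_*,r])\,(r-d_*)$ and comparing with the interval $[1-2\tau,1]$ is not immediate, because the interval lengths differ. Instead I would parametrize: fix $q$ and set
\[
F(\tau)\define \phi(1-2\tau)-\phi(1)-\phi(d_*(q,\tau))+\phi(r(q)).
\]
At $\tau=0$ we have $d_*(q,0)=r(q)$, since the only feasible point is $(q,1-q)$, so $F(0)=0$. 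I would then show $F'(\tau)>0$ on each piece of the piecewise formula for $d_*$. Using $\phi'(x)=-\frac{c}{2}\log_2\frac{1+cx}{1-cx}$, this amounts to
\[
2\,\bigl|\phi'(1-2\tau)\bigr| > \bigl|\phi'(d_*)\bigr|\cdot\Bigl|\frac{\partial d_*}{\partial\tau}\Bigr|.
\]
In the middle regime, $\partial_\tau d_*=-\sqrt2$ and $d_*=(1-2\tau)/\sqrt2$. On the side regimes, $|\partial_\tau d_*|=2(1-q-2\tau)/d_*<2$ (taking $q\le\frac12$), with $d_*> 1-q-2\tau$. The function $y\mapsto \artanh(cy)$ is convex and increasing, so $\artanh(cy)/y$ is nondecreasing. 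Write $y_1=1-q-2\tau$ and $y_2=1-2\tau$. Then $\bigl|\phi'(d_*)\bigr|\,|\partial_\tau d_*|$ is proportional to $\frac{\artanh(c d_*)}{d_*}\cdot 2y_1$. The required inequality is $\artanh(c y_2)>\frac{\artanh(cd_*)}{d_*}y_1$, which follows from $\frac{\artanh(cd_*)}{d_*}\le \frac{\artanh(cy_2)}{y_2}$ and $y_1<y_2$, provided $d_*\le y_2$. The condition $d_*\le y_2$ means $q^2+(1-q-2\tau)^2\le(1-2\tau)^2$, i.e.\ $q\le 1-2\tau$, which holds in the regime $\tau<\frac12-q$.

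The middle regime is handled the same way, using $\sqrt2\,\artanh(cy_2/\sqrt2)\cdot(1/\sqrt2)\cdot\sqrt2$ versus $2\artanh(cy_2)$; this again reduces to monotonicity of $\artanh(cy)/y$. Integrating from $0$ and using continuity of $d_*$ at the regime boundaries then gives $F(\tau)>0$ for $\tau>0$ and $q\in(0,1)$, which is the strict inequality \eqref{Eqn:StrictRateLossGeneralqStrict}.

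\textbf{Main obstacle.} The hardest part is the derivative bookkeeping across regimes, in particular verifying the side-regime inequality precisely, including the edge case $c=1$ ($\delta=0$), where $\artanh$ diverges at $1$. I would handle $c=1$ by continuity in $\delta$ for the weak inequality, and by the same argument with strictness checked away from $\tau=0$.
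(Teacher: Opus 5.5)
Your proof is correct, but it runs along a different axis than the paper's. The paper fixes $\tau$ and proves that $q\mapsto\Cnc(q,\tau,\delta)$ is symmetric about $q=\frac12$ and strictly decreasing on $[0,\frac12]$, by differentiating $J_a(d_*(q,\tau))-J_a(r(q))$ in $q$. Its maximum is therefore attained exactly at $q\in\{0,1\}$, where it equals $\Ced$. You instead fix $q$ and anchor at $\tau=0$, where $d_*(q,0)=r(q)$ makes both rates vanish. You then show that the gap $F(\tau)=\Ced-\Cnc$ is strictly increasing in $\tau$.

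The engine is the same in both proofs: monotonicity of $y\mapsto\artanh(cy)/y$ together with the piecewise form of $d_*$. Your regime checks go through. On the side piece, $d_*\le 1-2\tau$ is equivalent to $q\le 1-2\tau$, and $y_1<y_2$ holds exactly when $q>0$, which is where strictness enters. On the middle piece, the comparison $2\artanh(cy_2)>\sqrt2\,\artanh(cy_2/\sqrt2)$ already follows from monotonicity of $\artanh$ alone.

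The two routes yield complementary extra information. The paper's shows that, for fixed $\tau$, the rate loss increases as $q$ moves toward $\frac12$. Yours shows that, for fixed $q$, the rate loss increases with the cost budget $\tau$.

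Two small points to tidy up:
\begin{itemize}
\item Your concern about $c=1$ is unnecessary. For $\tau>0$ you have $d_*\le 1-2\tau<1$, so all derivatives are finite on the open pieces. You only need continuity of $F$ at $\tau=0$ and at the regime boundary $\tau=\frac12-q$, applying the mean value theorem piece by piece.
\item The case $q>\frac12$ should be reduced explicitly via $r(1-q)=r(q)$ and $d_*(1-q,\tau)=d_*(q,\tau)$, rather than left implicit in ``taking $q\le\frac12$''.
\end{itemize}
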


\begin{proof}
Set
\[
a\defeq 1-2\delta\in(0,1],
\qquad
c\defeq 1-2\tau\in[0,1),
\]
and define
\begin{equation}
\label{Eqn:JaDefinition}
J_a(x)\defeq h_2\!\left(\frac{1+ax}{2}\right),
\qquad x\in[0,1].
\end{equation}
By Proposition~\ref{Prop:QSTxCapacityExactExp},
\begin{equation}
\label{Eqn:CncAsJa}
C_{\mathrm{nc}}^{\mathrm{sl}}(q,\tau,\delta)
=
J_a(d_*(q,\tau))-J_a(r(q)).
\end{equation}
where
\[
r(q)=\sqrt{q^2+(1-q)^2},
\]
and
\[
d_*(q,\tau)=
\begin{cases}
\sqrt{q^2+(c-q)^2},
& q\le \dfrac12,\ \tau<\dfrac12-q,\\[2ex]
\dfrac{c}{\sqrt2},
& \left|q-\dfrac12\right|\le \tau\le \dfrac12,\\[2ex]
\sqrt{(q-c)^2+(1-q)^2},
& q\ge \dfrac12,\ \tau<q-\dfrac12.
\end{cases}
\]
Also,
\[
C_{\mathrm{ED}}(\tau,\delta)=h_2(\tau*\delta)-h_2(\delta).
\]
Since
\[
1-2(\tau*\delta)=ac,
\qquad
1-2\delta=a,
\]
the symmetry of the binary entropy function yields
\begin{equation}
\label{Eqn:CEDasJa}
C_{\mathrm{ED}}(\tau,\delta)=J_a(c)-J_a(1).
\end{equation}

\begin{figure}[t]
\centering
\begin{minipage}[t]{0.48\textwidth}
\centering
\includegraphics[width=\linewidth]{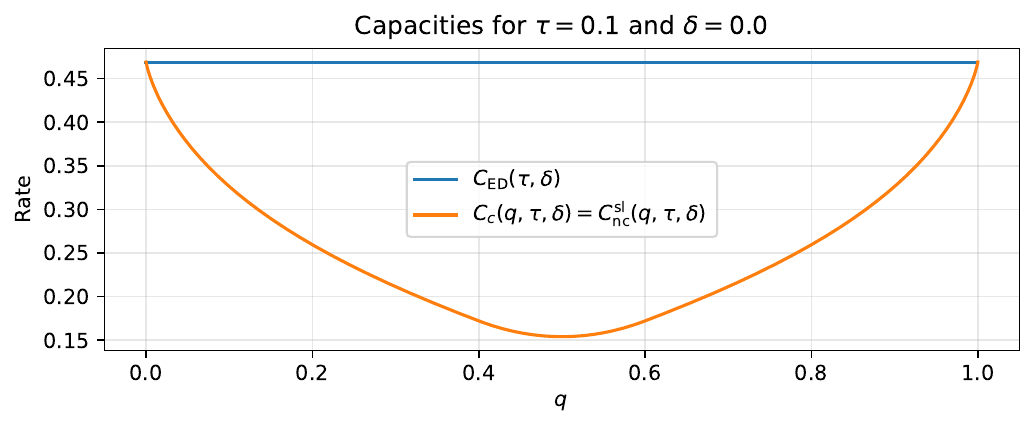}
\caption{Depiction of rate loss for a fixed cost constraint as a function of the state probability $q$.}
\label{Fig:FigPlotRateLossWrtState}
\end{minipage}
\hfill
\begin{minipage}[t]{0.48\textwidth}
\centering
\includegraphics[width=\linewidth]{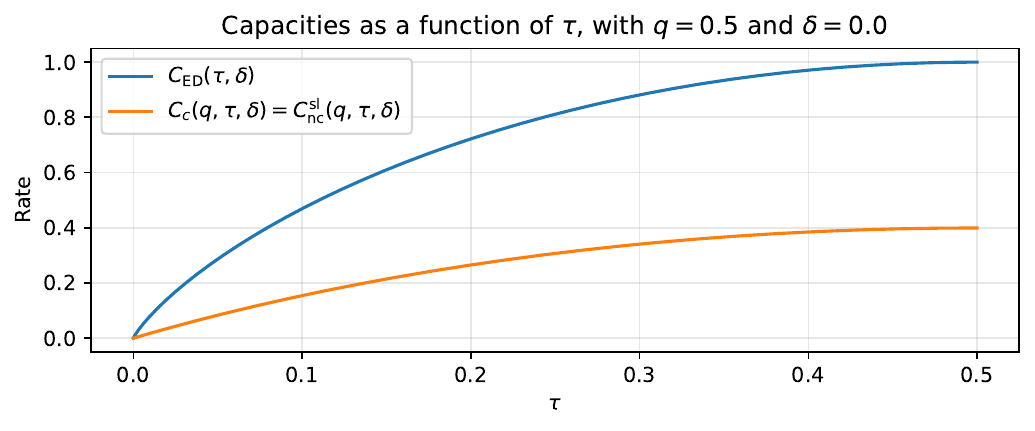}
\caption{Depiction of rate loss with respect to the cost constraint.}
\label{Fig:FigPlotRateLossWrtCost}
\end{minipage}
\end{figure}

We shall prove that the map
\[
q\longmapsto C_{\mathrm{nc}}^{\mathrm{sl}}(q,\tau,\delta)
\]
is symmetric with respect to \(q=\frac12\), and strictly decreasing on
\(\bigl[0,\frac12\bigr]\). This will imply that its maximum is attained
exactly at \(q=0\) and \(q=1\), where it coincides with
\(C_{\mathrm{ED}}(\tau,\delta)\).
We first note that
\[
r(1-q)=r(q),
\qquad
d_*(1-q,\tau)=d_*(q,\tau),
\]
by the explicit piecewise formula above. Hence
\begin{equation}
\label{Eqn:SymmetryHq}
C_{\mathrm{nc}}^{\mathrm{sl}}(1-q,\tau,\delta)
=
C_{\mathrm{nc}}^{\mathrm{sl}}(q,\tau,\delta).
\end{equation}
Therefore it suffices to study \(q\in[0,\frac12]\).
Define, for \(q\in[0,\frac12]\),
\[
H(q)\defeq J_a(d_*(q,\tau))-J_a(r(q)).
\]
We prove that \(H\) is strictly decreasing on \([0,\frac12]\). The derivative computations below are performed only at interior points \(q\in(0,\frac12)\). This is only relevant when \(a=1\), because \(J_1\) is not differentiable at \(x=1\); the only point where this boundary value can occur is \(q=0\), which will be evaluated separately by continuity.

We split into two cases.

\medskip

\noindent
\emph{Case 1: \(0\le q\le \frac12-\tau=\frac c2\).}
If \(c=0\), this case contains only the endpoint \(q=0\), so there is no derivative statement to prove in this case. Assume therefore, for the rest of Case~1, that \(c>0\). In this regime,
\[
d_*(q,\tau)=d(q)\defeq \sqrt{q^2+(c-q)^2},
\qquad
r(q)=\sqrt{q^2+(1-q)^2}.
\]
For \(q\in(0,\frac c2)\), the arguments \(d(q)\) and \(r(q)\) lie in \((0,1)\) whenever \(a=1\), and hence \(J_a\) is differentiable at the relevant points. We compute
\[
J_a'(x)
=
\frac{a}{2}\,h_2'\!\left(\frac{1+ax}{2}\right)
=
-\frac{a}{\ln 2}\,\artanh(ax).
\]
Therefore,
\[
H'(q)=J_a'(d(q))\,d'(q)-J_a'(r(q))\,r'(q).
\]
Now
\[
d'(q)=\frac{2q-c}{d(q)},
\qquad
r'(q)=\frac{2q-1}{r(q)},
\]
so
\begin{equation}
\label{Eqn:HprimeCase1}
H'(q)
=
\frac{a}{\ln 2}
\left[
\frac{c-2q}{d(q)}\,\artanh\!\bigl(a\,d(q)\bigr)
-
\frac{1-2q}{r(q)}\,\artanh\!\bigl(a\,r(q)\bigr)
\right].
\end{equation}
We now use the auxiliary function
\[
\psi_a(x)\defeq \frac{\artanh(ax)}{x}.
\]
On the range of \(d(q)\) and \(r(q)\) considered here, the function
\(x\mapsto \psi_a(x)\) is strictly increasing. Indeed, this follows from the
strict monotonicity of \(y\mapsto \artanh(y)/y\) on \((0,1)\); when \(a=1\),
the endpoint \(x=1\) is avoided because \(q>0\).
Because \(d(q)\le r(q)\), we obtain
\[
\frac{\artanh(a\,d(q))}{d(q)}
\le
\frac{\artanh(a\,r(q))}{r(q)}.
\]
Also,
\[
c-2q\le 1-2q,
\]
with strict inequality whenever \(\tau>0\). Combining these two facts in
\eqref{Eqn:HprimeCase1}, we get
\[
H'(q)\le 0
\qquad \text{for all }q\in\left(0,\frac c2\right).
\]
Moreover, since \(\tau>0\), we have \(c<1\), hence
\[
c-2q<1-2q
\]
for every \(q\in\left(0,\frac c2\right)\). Therefore
\begin{equation}
\label{Eqn:HprimeCase1Strict}
H'(q)<0
\qquad \text{for every }q\in\left(0,\frac c2\right).
\end{equation}

\medskip

\noindent
\emph{Case 2: \(\frac12-\tau=\frac c2\le q\le \frac12\).}
In this regime,
\[
d_*(q,\tau)=\frac{c}{\sqrt2}
\]
is constant, whereas \(r(q)=\sqrt{q^2+(1-q)^2}\). Hence
\[
H'(q)=-J_a'(r(q))\,r'(q).
\]
Using
\[
J_a'(r)=-\frac{a}{\ln 2}\,\artanh(ar)<0,
\qquad
r'(q)=\frac{2q-1}{r(q)}<0
\quad \text{for }q\in\left(0,\frac12\right),
\]
we conclude that
\begin{equation}
\label{Eqn:HprimeCase2Strict}
H'(q)<0
\qquad \text{for every }q\in\left(\frac c2,\frac12\right).
\end{equation}
Combining \eqref{Eqn:HprimeCase1Strict} and \eqref{Eqn:HprimeCase2Strict},
we conclude that \(H\) is strictly decreasing on \([0,\frac12]\). By the
symmetry \eqref{Eqn:SymmetryHq}, the function
\(q\mapsto C_{\mathrm{nc}}^{\mathrm{sl}}(q,\tau,\delta)\) is therefore
maximized exactly at \(q=0\) and \(q=1\), and is strictly smaller for every
\(q\in(0,1)\).

It remains to evaluate the endpoint value. At \(q=0\), we have
\[
r(0)=1,
\qquad
d_*(0,\tau)=c=1-2\tau,
\]
hence
\[
C_{\mathrm{nc}}^{\mathrm{sl}}(0,\tau,\delta)=J_a(c)-J_a(1).
\]
By \eqref{Eqn:CEDasJa}, this equals \(C_{\mathrm{ED}}(\tau,\delta)\). By
symmetry, the same holds at \(q=1\). This proves the proposition.
\end{proof}

\section{Efficiently Decoding Bivariate Interference via Coset Codes To Avoid Rate Loss over $3-$CQBCs}
\label{Sec:Broadcast}
As discussed in Sec.~\ref{Sec:Introduction}, the presence of a \textit{rate loss} motivates the design of a coding strategy for a broadcast channel (BC) that enables Rxs to decode as large a component of the interference as possible and minimizes the residual component that is precoded for by the Tx. Ex.~\ref{Ex:NonCommutative3CQBC} illustrates the relevance of this principle for the design of coding strategies over $3$-CQBCs. We use the standard notions of a $3$-CQBC and achievable rate triples, and focus here on the rate comparison needed for the separation between structured and unstructured coding strategies.
\begin{ex}
\label{Ex:NonCommutative3CQBC}
Let $\ulineCalX = \CalX_{1}\times \CalX_{2} \times \CalX_{3}$ denote the input set with $\CalX_{j} = \{0,1\}$ for $j \in [3]$. For $(x_{1},x_{2},x_{3}) \in \ulineCalX$, let
\begin{align}
\label{Eqn:NoisyCQBC}
\rho^{B_1Y_2Y_3}_{x_1x_2x_3}
&=
\bar\rho^{(\delta_1)}_{x_1,\,x_2\oplus x_3}
\otimes
\sigma^{Y_2}_{x_2}(\delta_2)
\otimes
\sigma^{Y_3}_{x_3}(\delta_3),\\
\label{Eqn:ClassicalBranches}
\sigma^{Y_k}_{x_k}(\delta_k)
&\define
(1-\delta_k)\ketbra{x_k}{x_k}+\delta_k\ketbra{1-x_k}{1-x_k},
\end{align}
for $k\in\{2,3\}$ and $\bar\rho^{(\delta_1)}_{x_1,\,x_2\oplus x_3}$ is defined based on \eqref{Eqn:NoisyStates}. Input bit $X_{1}$ is Hamming cost constrained to $\tau$, i.e. $\mathbb{E}[X_1]\le \tau$. What is the maximum rate of communication for Rx $1$, if Rxs $2$ and $3$ are fed at their respective capacities $1-h_{b}(\delta_{k})$ for $k=2,3$.
\end{ex}
We begin our study of Ex.~\ref{Ex:NonCommutative3CQBC} with a discussion and argue the sub-optimality of conventional IID code based strategies due to their inability to decode bivariate functions efficiently. We follow this by providing elements of a formal proof.

For simplicity, let's assume $\epsilon\define \delta_2=\delta_3$. Firstly, note that the Tx can input three bits $X_{1},X_{2},X_{3}$ with bit $X_{1}$ Hamming weight constrained to $\tau$. The three received components are unentangled. Moreover, received states at Rxs $2$ and $3$ are affected by only inputs bits $X_{2}$ and $X_{3}$ respectively. In fact, these are simple binary symmetric channels with crossover probabilities $\delta_{2},\delta_{3}$ respectively. The imposed rate requirements for Rxs $2$ and $3$ implies that input bits $X_{2},X_{3}$ must be uniform and independent. Let's henceforth focus on the Tx to Rx $1$ CQ channel. $X_2 \oplus X_3 \sim $ Ber$(\frac{1}{2})$  now plays the role of state $S$ in Ex.~\ref{Ex:QSTxExample}. Moreover, note that $X_2,X_3$ being dedicated to communicate to Rxs $2$ and $3$, Rx $1$ can be pumped information only through $X_1$ which is cost constrained.
The \textbf{key point} to note is that while the Hamming cost constraint on $X_1$ \textbf{constrains the possible inputs on bit $X_1$}, it does \textbf{not affect Rx $1$'s capability to decode}, with the latter governed only by the underlying Tx to Rx $1$ CQ channel.

From our study of Ex.~\ref{Ex:QSTxExample}, it is evident that if Rx $1$ has \textit{full} knowledge of $X_2 \oplus X_3$, then it can achieve a rate $\Ced(\tau,\delta_1)$, otherwise it has to be content with a lower rate. See Fig.~\ref{Fig:FigPlotRateLossWrtCost}. A key difference from Ex.~\ref{Ex:QSTxExample}, is that though $X_2,X_3$ are independent Ber$(\frac{1}{2})$, normalized entropy of $X_k^n$ is $1-\hTwo(\delta_{k})$ for $k=2,3$. What about normalized entropy of $X_2^n \oplus X_3^n$?

The answer crucially relies on the joint design of $X_2-,X_3-$codes. So long as two codes are unstructured IID and independent of each other, the number of pairwise additions explode resulting in normalized entropy of $X_2^n \oplus X_3^n$ being $\min\{1,2-2\hTwo(\epsilon)\}$. We conclude that \textbf{(Note 1)} if {unstructured IID coding strategy} must achieve a rate $\Ced(\tau,\delta_1)$ for Rx $1$, then Rx $1$ must be able to decode information at a total rate of $\min\{1,2-2\hTwo(\epsilon)\}+\Ced(\tau,\delta_1)$.

On the other hand, suppose that the \(X_2\)- and \(X_3\)-codebooks are
cosets of a common linear code, assuming for the moment that
\(\delta_2=\delta_3=\epsilon\). Since the sum of two cosets of a common
linear code is again a coset of the same linear code, the collection of
possible sums \(X_2^n\oplus X_3^n\) is constrained to a coset of rate
\(1-\hTwo(\epsilon)\). Thus receiver \(1\) does not need to decode the two
interfering codewords separately; it only needs to decode their sum, whose
effective rate is \(1-\hTwo(\epsilon)\).

Let
\[
        C_1:=\Ced(\tau,\delta_1)
        =
        \hTwo(\tau*\delta_1)-\hTwo(\delta_1),
\]
and let
\[
        C_{\mathrm{joint}}(\tau,\delta_1)
        :=
        I(X_1S;B_1)=\hTwo\!\left(
\frac{1+\frac{(1-2\delta_1)(1-2\tau)}{\sqrt2}}{2}
\right)
-
\hTwo(\delta_1),
\]
where \(S\sim\mathrm{Bern}(1/2)\), \(X_1\sim\mathrm{Bern}(\tau)\) are
independent, and the channel to receiver \(1\) is
\((x_1,s)\mapsto \bar\rho^{(\delta_1)}_{x_1,s}\). If
\[
        C_1+1-\hTwo(\epsilon)
        <
        C_{\mathrm{joint}}(\tau,\delta_1),
\]
then receiver \(1\) can jointly decode its own codeword and the structured
interference \(X_2^n\oplus X_3^n\). After recovering this interference, it is
in the state-informed situation and can communicate at rate \(C_1\).

In contrast, an unstructured IID strategy has no algebraic closure property
that compresses the bivariate interference into a codebook of rate
\(1-\hTwo(\epsilon)\). In the reduced unstructured comparison made below,
achieving the state-informed rate \(C_1\) forces receiver \(1\) to recover the
two interfering layers separately. Hence receiver \(1\) would have to support
a total rate \(C_1+2(1-\hTwo(\epsilon))\), which is impossible whenever
\[
        C_1+2(1-\hTwo(\epsilon))
        >
        C_{\mathrm{joint}}(\tau,\delta_1).
\]
Thus, whenever the two inequalities above hold simultaneously, the coset-code
strategy achieves the corner point
\[
        \bigl(C_1,1-\hTwo(\epsilon),1-\hTwo(\epsilon)\bigr),
\]
whereas the reduced unstructured IID subclass cannot.
\begin{theorem}
\label{Thm:StructuredCorner}
Assume
\begin{equation}
\label{Eqn:StructuredCondition}
\delta_2<\delta_3<\frac12
\qquad\text{and}\qquad
C_1+C_2<C_{\mathrm{joint}}(\tau,\delta_1).
\end{equation}
Then the rate triple
\begin{equation}
\label{Eqn:StructuredCorner}
(C_1,C_2,C_3)
=
\bigl(\Ced(\tau,\delta_1),\,1-\hTwo(\delta_2),\,1-\hTwo(\delta_3)\bigr)
\end{equation}
belongs to the closure of the structured inner bound obtained by specializing Theorem~1 of \cite{202503arXivBC_GouPad} to the channel in \eqref{Eqn:NoisyCQBC}. In particular, \eqref{Eqn:StructuredCorner} is achievable by a structured coding scheme based on nested coset codes.
\end{theorem}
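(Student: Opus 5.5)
The plan is to specialize the nested-coset-code inner bound of \cite[Thm.~1]{202503arXivBC_GouPad} by an explicit choice of auxiliary random variables and codebook structure, and then verify each of its rate constraints at the corner point \eqref{Eqn:StructuredCorner}. Since $\delta_2<\delta_3$, the $X_2$- and $X_3$-codes cannot be cosets of a single linear code of one rate. I would therefore use nested coset codes. Both users share a common coarse linear code $\Lambda$ whose rate is $1-\hTwo(\delta_2)$, the larger of the two rates. User~$2$'s message indexes the coset shift and $\Lambda$-codeword directly. User~$3$'s codeword is drawn from a subcode, which is a coset of $\Lambda$ with rate $1-\hTwo(\delta_3)\le 1-\hTwo(\delta_2)$, obtained by binning inside $\Lambda$.

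The choices are as follows. Take $V_2=X_2$ and $V_3=X_3$ uniform on $\{0,1\}$, independent, and let the bivariate interference be $W=V_2\oplus V_3$. Let $V_1=X_1\sim\Ber(\tau)$ be independent of $(V_2,V_3)$, and map the input as $(x_1,x_2,x_3)=(v_1,v_2,v_3)$, with no precoding. Closure of cosets under addition means $X_2^n\oplus X_3^n$ lies in a coset of $\Lambda$. Its effective rate is therefore at most $\max\{C_2,C_3\}=C_2=1-\hTwo(\delta_2)$. This is exactly why the hypothesis involves $C_2$.

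Next I would check the constraints of the specialized bound one receiver at a time.

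For Rx~$2$ and Rx~$3$, the channels are the classical BSCs $\sigma^{Y_k}_{x_k}(\delta_k)$ with uniform inputs, acting on unentangled tensor factors. The standard coset-code (linear code) achievability for a BSC gives $C_k=1-\hTwo(\delta_k)=I(X_k;Y_k)$. The binning or nesting constraints of the theorem are satisfied with equality in the limit, so the corner lies in the closure.

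For Rx~$1$, the receiver jointly decodes the pair consisting of its own codeword $V_1^n$ and the coset codeword $W^n$. The relevant packing constraints are:
\begin{equation*}
R_1+R_W<I(V_1W;B_1),\qquad R_W<I(W;B_1|V_1),\qquad R_1<I(V_1;B_1|W).
\end{equation*}
Here $R_W=C_2$ and $R_1=C_1$.

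These quantities evaluate as follows.
\begin{itemize}
\item With $W\sim\Ber(1/2)$ playing the role of the state $S$, the sum term gives $I(V_1W;B_1)=C_{\mathrm{joint}}(\tau,\delta_1)$. This uses the Bloch computation in the proof of Prop.~\ref{Prop:QSTxCapacityExactExp}: the average Bloch vector is $(1-2\delta_1)\bigl((1-2\tau)/2,(1-2\tau)/2\bigr)$, whose norm is $(1-2\delta_1)(1-2\tau)/\sqrt2$. Each conditional state has entropy $\hTwo(\delta_1)$. The hypothesis $C_1+C_2<C_{\mathrm{joint}}$ is precisely this constraint.
\item Given $W$, the channel is a BSC$(\delta_1)$ in the $Z$ or $X$ basis, so $I(V_1;B_1|W)=\hTwo(\tau*\delta_1)-\hTwo(\delta_1)=C_1$. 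The constraint therefore holds with equality, which suffices for membership in the closure.
\item The remaining individual constraint is $I(W;B_1|V_1)\ge C_{\mathrm{joint}}-C_1>C_2$, which follows from the chain rule together with the hypothesis.
\end{itemize}
If the theorem instead allows Rx~$1$ to decode $W$ without a uniquely-decodable requirement, these constraints only loosen.

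The main obstacle is the faithful specialization of \cite[Thm.~1]{202503arXivBC_GouPad}. That theorem has many auxiliary variables (common parts, private parts, and bivariate components for each receiver) together with binning and covering constraints. Two things must be shown.
\begin{itemize}
\item All unused auxiliaries can be set to constants without creating vacuous or violated inequalities.
\item The unequal rates $C_2\neq C_3$ are handled correctly by the nested structure, so that the effective rate of the sum-codebook seen by Rx~$1$ is $\max\{C_2,C_3\}$ and not larger.
\end{itemize}
I would first try to realize the scheme with $W$ as the only decoded bivariate component and all covering constraints trivial because the auxiliaries are independent. I would then argue that the strict inequality, combined with continuity in the rates, yields membership in the closure.
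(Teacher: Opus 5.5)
Your proposal is correct and follows essentially the paper's route: you specialize \cite[Thm.~1]{202503arXivBC_GouPad} with uniform $\mathbb{F}_2$ structured layers for users $2$ and $3$, take $X_1=V_1\sim\Ber(\tau)$, and let Rx~$1$ jointly decode its own codeword together with the sum, so the binding condition is $C_1+\max\{C_2,C_3\}<C_{\mathrm{joint}}(\tau,\delta_1)$ and the tight constraint $I(V_1;B_1|W)=C_1$ is absorbed by the closure. The only difference is cosmetic: the paper superimposes $\Ber(\varepsilon)$ private layers on the structured ones and sends $\varepsilon\downarrow 0$ using continuity of Holevo information, whereas you set those layers to constants from the start, and you also make explicit the individual constraint $I(W;B_1|V_1)\ge C_{\mathrm{joint}}-C_1>C_2$, which the paper leaves implicit.
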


\begin{proof}
We specialize Theorem~1 of Gouiaa--Padakandla with field size
\(\upsilon=2\). Let
\[
        U_2=U_3=\mathbb F_2,
\]
and for every \(\varepsilon\in(0,1/2)\), let
\begin{equation}
\label{Eqn:AuxChoiceUsers23}
        U_2,U_3 \sim \mathrm{Bern}\!\left(\frac12\right),
        \qquad
        V_2,V_3 \sim \mathrm{Bern}(\varepsilon),
\end{equation}
all independent. Let also
\begin{equation}
\label{Eqn:AuxChoiceUser1}
        V_1 \sim \mathrm{Bern}(\tau),
\end{equation}
independent of \((U_2,U_3,V_2,V_3)\). We choose the fusion map
\(f:U_2\times U_3\times V_1\times V_2\times V_3\to\{0,1\}^3\) as
\begin{equation}
\label{Eqn:FusionChoiceStructured}
        f(u_2,u_3,v_1,v_2,v_3)
        =
        (v_1,\;u_2\oplus v_2,\;u_3\oplus v_3).
\end{equation}
Equivalently,
\[
        X_1 = V_1,\qquad
        X_2 = U_2 \oplus V_2,\qquad
        X_3 = U_3 \oplus V_3.
\]
Thus the fusion map specifies the physical channel input, while the
structured interference decoded by receiver \(1\) is \(U_2\oplus U_3\).
Since \(V_1\sim\mathrm{Bern}(\tau)\), the cost constraint is met with equality.

Let
\[
        U:=U_2\oplus U_3,
        \qquad
        Z:=V_2\oplus V_3.
\]
Then the actual interference in the first branch is
\[
        X_2\oplus X_3=U\oplus Z,
\]
where
\[
        \Pr[Z=1]=2\varepsilon(1-\varepsilon)\xrightarrow[\varepsilon\downarrow0]{}0.
\]
Hence, as \(\varepsilon\downarrow0\), the channel seen by receiver \(1\)
converges to the cq channel
\[
        (v_1,u)\longmapsto \bar\rho^{(\delta_1)}_{v_1,u}.
\]
Consequently, by continuity of Holevo information,
\[
        I(V_1;B_1|U)\longrightarrow C_1,
        \qquad
        I(V_1U;B_1)\longrightarrow C_{\mathrm{joint}}(\tau,\delta_1).
\]
On the other hand, receivers \(2\) and \(3\) see BSC branches
\[
        Y_j=U_j\oplus V_j\oplus N_j,\qquad j=2,3,
\]
with \(V_j\sim\mathrm{Bern}(\varepsilon)\). Thus the perturbation \(V_j\)
carries a vanishing rate as \(\varepsilon\downarrow0\), while the structured
part \(U_j\) can be decoded at rates approaching
\[
        C_j=1-\hTwo(\delta_j).
\]

Therefore, if
\[
        C_1+\max\{C_2,C_3\}
        <
        C_{\mathrm{joint}}(\tau,\delta_1),
\]
the strict inequalities in Theorem~1 of Gouiaa--Padakandla can be satisfied
with arbitrarily small slack. Letting the slack and then \(\varepsilon\) tend
to zero yields the corner point
\[
        (C_1,C_2,C_3)
\]
in the closure of the structured inner bound.
\end{proof}

The condition \eqref{Eqn:StructuredCondition} is the natural noisy CQ analogue of the classical decoding threshold. In the present setting, however, the relevant quantity is not merely the interference-free capacity of the first branch, but the joint decoding quantity \(C_{\mathrm{joint}}(\tau,\delta_1)\) for simultaneous recovery of the first user’s codeword and the bivariate interference pattern. The nested-coset construction in Theorem~1 of \cite{202503arXivBC_GouPad}, specialized to the channel \eqref{Eqn:NoisyCQBC}, gives exactly this decoding condition and yields the corner point in \eqref{Eqn:StructuredCorner}.
\med{\textbf{A reduced unstructured subclass}} : We now make the comparison with unstructured IID coding explicit by isolating a concrete reduced subclass of the mixed Gouiaa--Padakandla inner bound. Recall that their general scheme contains a common layer $W$, six unstructured pairwise layers $Q_{ji}$, six structured layers $U_{ji}$, and three private layers $V_j$ \cite{202503arXivBC_GouPad}. The reduced subclass considered here is obtained by the specialization
$W\equiv \mathrm{const}$, $U_{ji}\equiv \mathrm{const}\ \text{for all }(j,i)$, $Q_{12}\equiv Q_{13}\equiv Q_{23}\equiv Q_{32}\equiv \mathrm{const}, Q_{21}\equiv T_2, Q_{31}\equiv T_3$.

This reduction is natural for \eqref{Eqn:NoisyCQBC}: users \(2\) and \(3\) each see an interference-free point-to-point branch, whereas receiver \(1\) alone faces the bivariate interference \(X_2\oplus X_3\); accordingly, within a purely IID comparison, the only pairwise unstructured layers that can naturally help manage this interference are \(Q_{21}\) and \(Q_{31}\), while the remaining \(Q\)-layers, the common layer \(W\), and the structured layers \(U_{ji}\) are suppressed in this reduced subclass. Thus the only nontrivial layers are $T_2,T_3,V_1,V_2,V_3$. We impose the factorization $p(t_2,t_3)\,p(v_1|t_2,t_3)\,p(v_2|t_2)\,p(v_3|t_3),$ and deterministic encoding functions $X_1=f_1(T_2,T_3,V_1),
X_2=f_2(T_2,V_2),
X_3=f_3(T_3,V_3).$
Receiver $1$ decodes $(T_2,T_3,V_1)$, receiver $2$ decodes $(T_2,V_2)$, and receiver $3$ decodes $(T_3,V_3)$. Let $\Cur(\tau)$ denote the closure of the set of all rate triples achievable by this reduced coding architecture under the cost constraint. For the proof below, we only need elementary packing constraints that are necessary for every point in $\Cur(\tau)$. Write $R_1=L_1, R_2=\nu_{21}+L_2, R_3=\nu_{31}+L_3$, where $\nu_{21}$ and $\nu_{31}$ denote the rates carried by $T_2$ and $T_3$ respectively, and $L_j$ is the private rate carried by $V_j$. Then every achievable triple in $\Cur(\tau)$ must satisfy
\begin{equation}
\label{Eqn:ReducedRx}
\begin{alignedat}{2}
L_1
&\le I(V_1;B_1|T_2,T_3),
&\qquad
L_2
&\le I(V_2;Y_2|T_2),\\
L_3
&\le I(V_3;Y_3|T_3),
&\qquad
\nu_{21}+L_2
&\le I(T_2,V_2;Y_2),\\
L_1+\nu_{21}+\nu_{31}
&\le I(V_1,T_2,T_3;B_1),
&\qquad
\nu_{31}+L_3
&\le I(T_3,V_3;Y_3).
\end{alignedat}
\end{equation}
These are single-user and full-sum packing constraints associated with the decoders $(T_2,T_3,V_1)$, $(T_2,V_2)$ and $(T_3,V_3)$.
\begin{proposition}
\label{Prop:ReducedUnstructuredSuboptimality}
Assume that $\delta_1<\frac12$ and
$\hTwo(\tau*\delta_1)+1>\hTwo(\delta_2)+\hTwo(\delta_3)$.
Then $(C_1,C_2,C_3)\notin \Cur(\tau)$.
\end{proposition}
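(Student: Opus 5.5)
The plan is to argue by contradiction. Suppose $(C_1,C_2,C_3)\in\Cur(\tau)$, and first assume the triple is attained exactly by some admissible distribution $p(t_2,t_3)p(v_1|t_2,t_3)p(v_2|t_2)p(v_3|t_3)$ with encoders $f_1,f_2,f_3$. Closure is handled at the end.

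Two numerical observations drive the argument.
\begin{enumerate}
\item The hypothesis $\hTwo(\tau*\delta_1)+1>\hTwo(\delta_2)+\hTwo(\delta_3)$ is equivalent to $C_1+C_2+C_3>1-\hTwo(\delta_1)$.
\item For any admissible distribution, $I(V_1,T_2,T_3;B_1)\le 1-\hTwo(\delta_1)$. This holds because $B_1$ is a qubit, so $S(B_1)\le 1$, and conditioned on $(V_1,T_2,T_3,V_2,V_3)$ the output is some $\bar\rho^{(\delta_1)}_{x_1,s}$. Concavity of entropy then gives $S(B_1|V_1T_2T_3)\ge\hTwo(\delta_1)$.
\end{enumerate}
By the sum constraint in \eqref{Eqn:ReducedRx}, it therefore suffices to show $\nu_{21}+\nu_{31}\ge C_2+C_3$. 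Since $\nu_{j1}=C_j-L_j$, this reduces to showing that $L_2=L_3=0$ is forced.

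\textbf{Step 1: the rate-loss step at receiver $1$.} Fix $(t_2,t_3)$. By the factorization, $V_1$, $V_2$ and $V_3$ are conditionally independent given $(t_2,t_3)$. Hence the state $S=X_2\oplus X_3=f_2(t_2,V_2)\oplus f_3(t_3,V_3)$ is independent of $V_1$, with some law $\Ber(q_t)$. Conditionally, receiver $1$ therefore sees exactly the QSTx of Ex.~\ref{Ex:QSTxExample} with causal auxiliary $U=V_1$, encoder $x=f_1(t,u)$ (not even depending on $s$) and cost $\tau_t\define\mathbb E[X_1|T=t]$. Theorem~\ref{Thm:QSTxKnownCapacities} and Proposition~\ref{Prop:QSTxCapacityExactExp} then give
\[
I(V_1;B_1|T=t)\le \Cc(q_t,\tau_t,\delta_1).
\]
By Proposition~\ref{Prop:QSTXStrictRateLoss}, the right-hand side is at most $\Ced(\tau_t,\delta_1)$, with strict inequality whenever $q_t\in(0,1)$ and $\tau_t>0$. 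Since $\tau\mapsto\hTwo(\tau*\delta_1)$ is concave and nondecreasing on $[0,\frac12]$, and $\sum_t p(t)\tau_t\le\tau$, we get
\[
C_1=L_1\le\sum_t p(t)\Ced(\tau_t,\delta_1)\le\Ced(\tau,\delta_1)=C_1 .
\]
So every inequality in this chain is tight. Because $\delta_1<\frac12$ makes the loss in Proposition~\ref{Prop:QSTXStrictRateLoss} strict, tightness forces $q_t\in\{0,1\}$ for every $t$ with $p(t)>0$ and $\tau_t>0$. The case $\tau_t=0$ needs a short separate check: there $\Ced(0,\delta_1)=0$ and the term contributes nothing. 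In that case I would instead use concavity-tightness, which prevents mass at $\tau_t=0$ when $\tau>0$.

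\textbf{Step 2: propagating determinism to receivers $2$ and $3$.} From $q_t\in\{0,1\}$, the sum $f_2(t_2,V_2)\oplus f_3(t_3,V_3)$ is a.s. constant given $t$. Since the two summands are conditionally independent, each must be a.s. constant. Because $V_2$ depends only on $T_2$, $X_2$ is then a deterministic function of $T_2$, and similarly $X_3$ of $T_3$. Consequently $I(V_2;Y_2|T_2)=I(V_3;Y_3|T_3)=0$, so $L_2=L_3=0$, $\nu_{21}=C_2$ and $\nu_{31}=C_3$. The sum constraint in \eqref{Eqn:ReducedRx} then yields
\[
C_1+C_2+C_3\le I(V_1,T_2,T_3;B_1)\le 1-\hTwo(\delta_1),
\]
contradicting the first observation.

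\textbf{Step 3: passing to the closure (expected main obstacle).} The argument above is exact only for distributions attaining the corner point, while $\Cur(\tau)$ is a closure. I would first try to bound the auxiliary cardinalities $|\mathcal T_j|$ and $|\mathcal V_j|$ by support-lemma arguments, so that the admissible distributions form a compact set on which all quantities in \eqref{Eqn:ReducedRx} are continuous. A sequence approaching the corner then has a limit distribution attaining it, and Steps 1--2 apply to the limit. If cardinality bounds prove awkward, the fallback is a quantitative version: use continuity of $\Cc$ and a uniform gap of order $\min(q_t,1-q_t)$ in Proposition~\ref{Prop:QSTXStrictRateLoss} to show that near-tightness forces $\mathbb E[\min(q_T,1-q_T)]\to0$. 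That in turn forces $L_2,L_3\to0$ by a continuity (Fano-type) estimate, giving a contradiction in the limit.
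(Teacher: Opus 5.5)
Your proof has the same structure as the paper's. Conditioning on $T=(T_2,T_3)$ and using the rate loss forces $q_t\in\{0,1\}$. Conditional independence of $X_2,X_3$ given $T$ then makes $X_j$ a function of $T_j$, so $L_2=L_3=0$. Finally, the receiver-$1$ sum constraint together with $I(V_1,T_2,T_3;B_1)\le 1-\hTwo(\delta_1)$ contradicts the hypothesis.

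\textbf{Step~1 needs reordering.} Propositions~\ref{Prop:QSTxCapacityExactExp} and~\ref{Prop:QSTXStrictRateLoss} are stated only for costs $\tau\le\frac12$. The conditional costs $\tau_t$, however, can lie anywhere in $[0,1]$. For $\tau_t>\frac12$ your inequality $\Cc(q_t,\tau_t,\delta_1)\le\Ced(\tau_t,\delta_1)$ is false: at $\tau_t=1$ the right side is $0$, while the left side is the unconstrained capacity.

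The bound $I(V_1;B_1|T=t)\le\hTwo(\tau_t*\delta_1)-\hTwo(\delta_1)$ does hold for every $\tau_t$, provided you prove it directly by revealing $S$ to the decoder. Given $S=s$ and $T=t$, the input is $\Ber(\tau_t)$ through a BSC$(\delta_1)$. This is the paper's order of steps:
\begin{enumerate}
\item Reveal $S$ to get the bound above.
\item Apply Jensen, using strict concavity of $\lambda\mapsto\hTwo(\lambda*\delta_1)$ on $[0,1]$ and strict monotonicity on $[0,\frac12]$, to get $\tau_t=\tau$ for $p$-a.e.\ $t$.
\item Only then invoke Proposition~\ref{Prop:QSTXStrictRateLoss}, at the single cost $\tau$.
\end{enumerate}
Your ``concavity-tightness'' remark already contains this. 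Use it to obtain $\tau_t\equiv\tau$, which handles $\tau_t=0$ and $\tau_t>\frac12$ at once.

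This step needs $\tau>0$. The statement omits this assumption, but both you and the paper use it. For $\tau=0$ the claim is false: the corner $(0,C_2,C_3)$ is attained with $T_2,T_3,V_1$ constant and $X_j=V_j\sim\Ber(\frac12)$.

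\textbf{Step~3 goes beyond the paper.} The paper's proof treats a point of the closure $\Cur(\tau)$ as if it were attained by a single distribution, so you are right to address this. Both of your routes are reasonable. In the quantitative one you can avoid Fano:
\begin{itemize}
\item The identity $|1-2(a*b)|=|1-2a|\,|1-2b|$, together with $X_2\perp X_3\mid T$ and $X_j\perp T_{\bar j}\mid T_j$ (where $\bar j$ denotes the other index), gives $H(X_j|T_j)\le H(S|T)$.
\item $L_j\le I(V_j;Y_j|T_j)\le H(X_j|T_j)$.
\item Hence $L_2+L_3\le 2H(S|T)$.
\end{itemize}
It then only remains to show that near-tightness at receiver~$1$ forces $H(S|T)\to0$.
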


\begin{proof}
Suppose, for contradiction, that $(C_1,C_2,C_3)\in \Cur(\tau)$. Since $Y_2$ depends only on $X_2$, we have the Markov chain $(T_2,V_2)-X_2-Y_2$, and therefore
\begin{equation}
\label{Eqn:R2Chain}
R_2=\nu_{21}+L_2\le I(T_2,V_2;Y_2)\le I(X_2;Y_2)\le C_2.
\end{equation}
Since $R_2=C_2$, all inequalities in \eqref{Eqn:R2Chain} are equalities. Thus $X_2\sim \Ber(1/2)$ and $X_2$ is a deterministic function of $(T_2,V_2)$. Likewise, $X_3\sim \Ber(1/2)$ and $X_3$ is a deterministic function of $(T_3,V_3)$. Define $T\define (T_2,T_3)$ and $S\define X_2\oplus X_3.$
By the reduced factorization $p(t_2,t_3)p(v_1|t_2,t_3)p(v_2|t_2)p(v_3|t_3)$, we have $V_2\perp V_3\mid T$, whence $X_2\perp X_3\mid T$.

Next, conditioned on $T=t$, receiver $1$ sees exactly the noisy BB84-like state-dependent channel $(x_1,s)\mapsto \bar\rho^{(\delta_1)}_{x_1,s}$
with state distribution $S\sim \Ber(q_t)$, where $q_t\define P(S=1|T=t)$, and conditional input cost $\tau_t\define P(X_1=1|T=t)$. Since $X_1=f_1(T,V_1)$ and $S$ is a function of $(T,V_2,V_3)$, we have $V_1\perp S\mid T$. Hence $I(V_1;B_1|T=t)\le \Cnc(q_t,\tau_t,\delta_1).$
Using $L_1 \le I(V_1;B_1|T_2,T_3),$ we obtain
\begin{equation}
\label{Eqn:R1Reduction}
R_1\le \sum_t p(t)\,\Cnc(q_t,\tau_t,\delta_1).
\end{equation}
On the other hand, revealing $S$ to the decoder yields for all $t$
\begin{equation}
\label{Eqn:PrimitivePointwiseED}
I(V_1;B_1|T=t)\le \hTwo(\tau_t*\delta_1)-\hTwo(\delta_1).
\end{equation}
Let $F_{\delta_1}(\lambda)\define \hTwo(\lambda*\delta_1)-\hTwo(\delta_1)$ for $\lambda\in[0,1]$. Since $\delta_1<1/2$, the map $F_{\delta_1}$ is strictly concave on $[0,1]$. Summing \eqref{Eqn:PrimitivePointwiseED} and applying Jensen's inequality, we obtain
\begin{equation}
\label{Eqn:R1UpperBound}
R_1\le \sum_t p(t)F_{\delta_1}(\tau_t)
\le F_{\delta_1}\!\left(\sum_t p(t)\tau_t\right)
\le F_{\delta_1}(\tau)=C_1.
\end{equation}
Because $R_1=C_1$, equality must hold throughout \eqref{Eqn:R1UpperBound}. Hence $\tau_t=\tau$ for $p(t)$-a.e. $t$. Returning to \eqref{Eqn:R1Reduction} and invoking Proposition~\ref{Prop:QSTXStrictRateLoss}, we conclude that equality can hold only if
$q_t\in\{0,1\}$ for $p(t)$-a.e. $t$.
Equivalently, $H(S|T)=0$. Combined with $X_2\perp X_3\mid T$, this forces $H(X_2|T)=H(X_3|T)=0$: indeed, if $X_2\oplus X_3$ is deterministic given $T=t$ while $X_2$ and $X_3$ are conditionally independent given $T=t$, then the two conditional distributions cannot both be non-degenerate. Moreover, under the reduced factorization, $X_2=f_2(T_2,V_2)$ is conditionally independent of $T_3$ given $T_2$, and $X_3=f_3(T_3,V_3)$ is conditionally independent of $T_2$ given $T_3$; hence $H(X_2|T_2)=H(X_3|T_3)=0$. Consequently $I(V_2;Y_2|T_2)=I(V_3;Y_3|T_3)=0$, so the second and fifth inequalities in \eqref{Eqn:ReducedRx} imply $L_2=L_3=0$. Therefore the entire rates $R_2$ and $R_3$ are carried by $T_2$ and $T_3$, both decoded by receiver $1$. We may now use the receiver-$1$ sum-rate inequality in \eqref{Eqn:ReducedRx}: $R_1+R_2+R_3=L_1+\nu_{21}+\nu_{31}\le I(V_1,T_2,T_3;B_1)$.

Since $B_1$ depends on the channel input only through $(X_1,S)$, data processing yields $R_1+R_2+R_3\le I(X_1,S;B_1).$ Each of the four states $\bar\rho^{(\delta_1)}_{x,s}$ has entropy $\hTwo(\delta_1)$, while the output system $B_1$ is a qubit. Hence $I(X_1,S;B_1)\le 1-\hTwo(\delta_1).$ Combining the last two equations, $R_1+R_2+R_3\le 1-\hTwo(\delta_1).$ At the corner point $(C_1,C_2,C_3)$, the left-hand side equals $\hTwo(\tau*\delta_1)-\hTwo(\delta_1)+1-\hTwo(\delta_2)+1-\hTwo(\delta_3),$ so that $\hTwo(\tau*\delta_1)+1\le \hTwo(\delta_2)+\hTwo(\delta_3),$ a contradiction. This proves the proposition.
\end{proof}
Theorem~\ref{Thm:StructuredCorner} and Proposition~\ref{Prop:ReducedUnstructuredSuboptimality} together show that the noisy BB84-like state-dependent channel furnishes a natural CQBC example where a structured coding strategy strictly outperforms a natural reduced unstructured subclass, as made explicit by the following corollary.

\begin{corollary}
\label{Cor:NonemptySeparationRegime}
The hypotheses of Theorem~\ref{Thm:StructuredCorner} and
Proposition~\ref{Prop:ReducedUnstructuredSuboptimality} are simultaneously
satisfiable. In particular, the choice $\tau=0.05,
\delta_1=0,
\delta_2=0.158,
\delta_3=0.159$ satisfies both sets of conditions.
Consequently, for these parameters, the corner point $(C_1,C_2,C_3)$
belongs to the closure of the structured inner bound of
Theorem~\ref{Thm:StructuredCorner}, but does not belong to the reduced
unstructured subclass \(\mathcal R_{\mathrm{u,red}}(\tau)\).
\end{corollary}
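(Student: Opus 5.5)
The plan is to verify the stated numerical hypotheses directly and then combine Theorem~\ref{Thm:StructuredCorner} with Proposition~\ref{Prop:ReducedUnstructuredSuboptimality}. There are three conditions to check at $\tau=0.05$, $\delta_1=0$, $\delta_2=0.158$, $\delta_3=0.159$:
\begin{enumerate}
\item[(i)] $\delta_2<\delta_3<\frac12$;
\item[(ii)] $C_1+C_2<C_{\mathrm{joint}}(\tau,\delta_1)$;
\item[(iii)] $\delta_1<\frac12$ and $\hTwo(\tau*\delta_1)+1>\hTwo(\delta_2)+\hTwo(\delta_3)$.
\end{enumerate}
Conditions (i) and $\delta_1<\frac12$ are immediate. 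Since $\delta_1=0$, we have $\tau*\delta_1=\tau$ and $\hTwo(\delta_1)=0$, which simplifies everything. Then $C_1=\hTwo(0.05)$, and
\[
C_{\mathrm{joint}}=\hTwo\!\left(\frac{1+0.9/\sqrt2}{2}\right).
\]

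For the numerical evaluation, I will bound each binary entropy to about three decimals. The approximate values are:
\begin{itemize}
\item $\hTwo(0.05)\approx 0.2864$, so $C_1\approx 0.2864$;
\item $\hTwo(0.158)\approx 0.6295$, so $C_2\approx 0.3705$;
\item $\hTwo(0.159)\approx 0.6319$, so $C_3\approx 0.3681$;
\item $0.9/\sqrt2\approx 0.6364$, so the argument of $\hTwo$ in $C_{\mathrm{joint}}$ is about $0.8182$, and $C_{\mathrm{joint}}\approx \hTwo(0.1818)\approx 0.684$.
\end{itemize}
Hence $C_1+C_2\approx 0.657<0.684$, which gives (ii) with a margin of roughly $0.027$. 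Note also that $C_2>C_3$, so this is exactly the quantity $C_1+\max\{C_2,C_3\}$ required in the proof of Theorem~\ref{Thm:StructuredCorner}. For (iii), $1+\hTwo(0.05)\approx 1.286$, while $\hTwo(0.158)+\hTwo(0.159)\approx 1.261$, a margin of about $0.025$.

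To make this rigorous rather than merely numerical, I will use that $\hTwo$ is increasing on $[0,\frac12]$ and symmetric about $\frac12$. Each entropy can then be sandwiched between values at nearby rational points, for instance $0.0499\le$ argument $\le 0.0501$, with elementary logarithm bounds on each side. The margins of about $0.025$ are far larger than the resulting rounding errors of about $10^{-3}$. I expect this to be the only real, if mild, obstacle: keeping the error accounting honest, particularly for the irrational argument $\frac{1+0.9/\sqrt2}{2}$. I would handle it by bounding $\sqrt2$ between $1.414$ and $1.415$ and using monotonicity once more.

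With all hypotheses verified, Theorem~\ref{Thm:StructuredCorner} places $(C_1,C_2,C_3)$ in the closure of the structured inner bound. Proposition~\ref{Prop:ReducedUnstructuredSuboptimality} gives $(C_1,C_2,C_3)\notin\Cur(\tau)$, which establishes the corollary.
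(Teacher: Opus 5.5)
Your proposal is correct and matches the paper. The paper gives no written proof of this corollary: it follows from checking $\delta_2<\delta_3<\tfrac12$, $C_1+C_2<C_{\mathrm{joint}}(\tau,\delta_1)$ and $\hTwo(\tau*\delta_1)+1>\hTwo(\delta_2)+\hTwo(\delta_3)$ at the stated parameters, then invoking Theorem~\ref{Thm:StructuredCorner} and Proposition~\ref{Prop:ReducedUnstructuredSuboptimality}, exactly as you do. Your values are accurate with comfortable margins ($C_1+C_2\approx 0.657<0.684\approx C_{\mathrm{joint}}$, and $1.286>1.261$), and you can also note that $\tau=0.05>0$, which the proposition implicitly needs for the strict rate loss it uses.
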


\appendices

\section{Auxiliary Convexity and Geometry Lemmas}
\label{App:AuxiliaryConvexityGeometry}

\subsection{Convexity}

\begin{lemma}
\label{Lem:ConvexityGDelta}
For every \(\delta\in[0,1/2]\), the function
\[
g_\delta(t)
=
\hTwo\!\left(
\frac{1+(1-2\delta)\sqrt{t^2+(1-t)^2}}{2}
\right)-\hTwo(t),
\qquad t\in[0,1],
\]
is convex on \([0,1]\).
\end{lemma}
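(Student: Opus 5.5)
The plan is a direct second-derivative computation on the open interval $(0,1)$, followed by a monotonicity-in-$\delta$ reduction to the worst case $\delta=0$. Convexity on $[0,1]$ then follows from continuity of $g_\delta$ at the endpoints. Throughout, set $a\defeq 1-2\delta\in[0,1]$, $r(t)=\sqrt{t^2+(1-t)^2}\in[1/\sqrt2,1]$ and $J_a(x)=\hTwo((1+ax)/2)$, as in the proof of Proposition~\ref{Prop:QSTXStrictRateLoss}. In this notation $g_\delta(t)=J_a(r(t))-\hTwo(t)$.

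\textbf{Step 1: algebraic identities for $r$.} From $r^2=2t^2-2t+1$ I get $rr'=2t-1$ and $r'^2+rr''=2$. Using $(2t-1)^2=2r^2-1$, these give
\[
r''=\frac{2r^2-(2t-1)^2}{r^3}=\frac{1}{r^3}.
\]
I also record $1-r^2=2t(1-t)$.

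\textbf{Step 2: the second derivative.} I use $J_a'(x)=-\frac{a}{\ln 2}\artanh(ax)$ and $J_a''(x)=-\frac{a^2}{\ln2}\frac{1}{1-a^2x^2}$, together with $-\hTwo''(t)=\frac{1}{\ln 2\,t(1-t)}$. Then
\[
\ln 2\cdot g_\delta''(t)=\frac{1}{t(1-t)}-\frac{a^2(2t-1)^2}{r^2(1-a^2r^2)}-\frac{a\,\artanh(ar)}{r^3}.
\]
Both subtracted terms are nondecreasing in $a\in[0,1]$ for fixed $t$. Hence it suffices to prove nonnegativity at $a=1$ for $t\in(0,1)$; the case $t=1/2$ is included, where $r=1/\sqrt2<1$.

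\textbf{Step 3: the case $a=1$, which is the crux.} Substituting $t(1-t)=(1-r^2)/2$ and $(2t-1)^2=2r^2-1$, the first two terms combine exactly:
\[
\frac{2}{1-r^2}-\frac{2r^2-1}{r^2(1-r^2)}=\frac{1}{r^2(1-r^2)}.
\]
So the claim reduces to the one-variable inequality $\frac{r}{1-r^2}\ge \artanh(r)$ for $r\in[1/\sqrt2,1)$. This follows by comparing power series termwise:
\[
\frac{r}{1-r^2}=\sum_{k\ge0}r^{2k+1}\;\ge\;\sum_{k\ge0}\frac{r^{2k+1}}{2k+1}=\artanh(r).
\]
Thus $g_\delta''\ge0$ on $(0,1)$ for every $\delta\in[0,1/2]$.

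\textbf{Step 4: closing.} $g_\delta$ is continuous on $[0,1]$, including at the endpoints where $r=1$ and $J_1$ is merely continuous. A continuous function with nonnegative second derivative on the interior is convex on the closed interval.

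The main obstacle is Step 3. A priori one must control two negative contributions, the concavity of $J_a$ and the convexity of $r$ combined with $J_a'<0$. The key point to verify carefully is the exact cancellation that produces $1/(r^2(1-r^2))$, which relies on the identities $r''=r^{-3}$ and $1-r^2=2t(1-t)$. If that cancellation failed, I would instead fall back on bounding the two terms separately using $r\ge 1/\sqrt2$.
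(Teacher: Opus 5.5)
Your proof is correct and follows essentially the same route as the paper. Both compute the same expression for $g_\delta''$ on $(0,1)$, use monotonicity in $a=1-2\delta$ to reduce to $\delta=0$, and then prove the one-variable inequality $\frac{r}{1-r^2}\ge\artanh(r)$. The differences are cosmetic: you check monotonicity in $a$ term by term instead of differentiating the paper's auxiliary $H_r$, and you prove the final inequality by termwise power-series comparison instead of differentiating $\phi(s)=\frac{2s}{1-s^2}-\ln\frac{1+s}{1-s}$.
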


\begin{proof}
Set
\[
a\defeq 1-2\delta\in[0,1],
\qquad
r(t)\defeq \sqrt{t^2+(1-t)^2},
\qquad
\lambda_a(t)\defeq \frac{1+a\,r(t)}{2}.
\]
Then
\[
g_\delta(t)=\hTwo(\lambda_a(t))-\hTwo(t).
\]
For \(t\in(0,1)\),
\[
r'(t)=\frac{2t-1}{r(t)},
\qquad
r''(t)=\frac{1}{r(t)^3},
\]
hence
\[
\lambda_a'(t)=\frac{a(2t-1)}{2r(t)},
\qquad
\lambda_a''(t)=\frac{a}{2r(t)^3}.
\]
Recall that
\[
\hTwo'(x)=\log_2\frac{1-x}{x},
\qquad
\hTwo''(x)=-\frac{1}{x(1-x)\ln 2}.
\]
Therefore,
\[
g_\delta''(t)
=
\hTwo''(\lambda_a(t))(\lambda_a'(t))^2
+
\hTwo'(\lambda_a(t))\lambda_a''(t)
-
\hTwo''(t).
\]
Using
\[
\lambda_a(t)(1-\lambda_a(t))=\frac{1-a^2r(t)^2}{4},
\qquad
t(1-t)=\frac{1-r(t)^2}{2},
\qquad
(2t-1)^2=2r(t)^2-1,
\]
a direct simplification yields
\[
g_\delta''(t)
=
\frac{1}{\ln 2}
\left[
\frac{2}{1-r^2}
-
\frac{a^2(2r^2-1)}{r^2(1-a^2r^2)}
+
\frac{a}{2r^3}\ln\frac{1-ar}{1+ar}
\right],
\]
where \(r=r(t)\). Now fix \(r\in[1/\sqrt2,1)\), and set
\[
x\defeq ar\in[0,r].
\]
Define
\[
H_r(x)
\defeq
-\frac{x^2(2r^2-1)}{1-x^2}
+
\frac{x}{2}\ln\frac{1-x}{1+x}.
\]
Then
\[
g_\delta''(t)
=
\frac{1}{\ln 2}
\left[
\frac{2}{1-r^2}+\frac{1}{r^4}H_r(x)
\right].
\]
A direct differentiation gives
\[
H_r'(x)
=
\frac12\ln\frac{1-x}{1+x}
-
\frac{x}{1-x^2}
-
\frac{2x(2r^2-1)}{(1-x^2)^2}.
\]
Since \(x\in(0,1)\) and \(r\in[1/\sqrt2,1)\), each term on the right-hand side is nonpositive, hence
\[
H_r'(x)\le 0.
\]
Therefore \(H_r\) is decreasing, so for fixed \(t\), the quantity \(g_\delta''(t)\) is decreasing in \(a\in[0,1]\). Consequently,
\[
g_\delta''(t)\ge g_0''(t),
\]
where \(g_0\) corresponds to \(a=1\), i.e., \(\delta=0\).
For \(\delta=0\), one has
\[
g_0''(t)
=
\frac{1}{2\ln 2\,r(t)^3}
\left(
\frac{2r(t)}{1-r(t)^2}
-
\ln\frac{1+r(t)}{1-r(t)}
\right).
\]
Define, for \(0\le s<1\),
\[
\phi(s)\defeq \frac{2s}{1-s^2}-\ln\frac{1+s}{1-s}.
\]
Then \(\phi(0)=0\) and
\[
\phi'(s)=\frac{4s^2}{(1-s^2)^2}\ge 0.
\]
Hence \(\phi(s)\ge 0\) for all \(s\in[0,1)\). Applying this with \(s=r(t)\), we obtain
\[
g_0''(t)\ge 0.
\]
Therefore
\[
g_\delta''(t)\ge 0
\qquad\text{for all }t\in(0,1).
\]
Since \(g_\delta\) is continuous on \([0,1]\), it is convex on \([0,1]\).
\end{proof}

\subsection{Geometry of the Minimization}

\begin{lemma}
\label{Lem:GeometryDStar}
For every \(q\in[0,1]\) and \(\tau\in[0,1/2]\),
\[
d_*(q,\tau)=
\begin{cases}
\sqrt{q^2+(1-q-2\tau)^2},
& q\le \dfrac12,\ \tau<\dfrac12-q,\\[2ex]
\dfrac{1-2\tau}{\sqrt2},
& \left|q-\dfrac12\right|\le \tau\le \dfrac12,\\[2ex]
\sqrt{(q-2\tau)^2+(1-q)^2},
& q\ge \dfrac12,\ \tau<q-\dfrac12.
\end{cases}
\]
\end{lemma}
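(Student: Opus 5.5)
This is a Euclidean projection problem: minimize the distance from the origin to the convex polygon $P\defeq\{(A,B):|A|\le q,\ |B|\le 1-q,\ A+B\ge c\}$, where $c\defeq 1-2\tau\in[0,1]$. The plan is to first locate the minimizer on the line $A+B=c$, then minimize along that line in closed form.

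\textbf{Step 1: the constraint $A+B\ge c$ is active.} The rectangle contains the point $(q,1-q)$, for which $A+B=1\ge c$, so $P$ is nonempty. If $c=0$, the origin lies in $P$, and $d_*=0=c/\sqrt2$, which matches the middle case since $\tau=\tfrac12$ there. If $c>0$, the origin is not in $P$. Suppose a minimizer $(A,B)$ had $A+B>c$. Then the point $(1-\epsilon)(A,B)$ stays in the rectangle, which is symmetric and convex, and still satisfies the sum constraint for small $\epsilon$, yet it is strictly shorter. So at the minimizer $A^*+B^*=c$. This is also the fact used in the achievability part of Proposition~\ref{Prop:QSTxCapacityExactExp}.

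\textbf{Step 2: reduce to one variable.} Substitute $B=c-A$. The constraints $|A|\le q$ and $|c-A|\le 1-q$ combine to
\[
A\in I\defeq\bigl[\max\{-q,\ c-(1-q)\},\ \min\{q,\ c+1-q\}\bigr].
\]
Since $c\le 1$, the lower end is $c-1+q$, because $c-1+q\ge -q$ iff $c\ge 1-2q$, and otherwise it is $-q$. Similarly the upper end is $q$ whenever $q\le c+1-q$. The objective $A^2+(c-A)^2$ is a convex quadratic with unconstrained minimizer $A=c/2$, so the constrained minimizer is the projection of $c/2$ onto $I$.

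\textbf{Step 3: case analysis.} The middle case is when $c/2\in I$. This requires $c/2\le q$, i.e.\ $\tau\ge \tfrac12-q$, together with $c/2\ge c-1+q$, i.e.\ $\tau\ge q-\tfrac12$, and the bound $c/2\ge -q$ holds trivially. Together these give $|q-\tfrac12|\le\tau$, and then $d_*=c/\sqrt2$.

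If instead $c/2>q$, equivalently $\tau<\tfrac12-q$ (which forces $q<\tfrac12$), the minimizer is $A=q$, giving $d_*=\sqrt{q^2+(c-q)^2}=\sqrt{q^2+(1-q-2\tau)^2}$. For this we need $q\in I$, which holds since $c-1+q\le q$.

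If $c/2<c-1+q$, equivalently $\tau<q-\tfrac12$, the minimizer is $A=c-1+q$ and $B=1-q$. This gives $d_*=\sqrt{(c-1+q)^2+(1-q)^2}=\sqrt{(q-2\tau)^2+(1-q)^2}$.

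The three regimes are mutually exclusive and exhaust $\tau\in[0,\tfrac12]$. Their conditions also imply the $q\le\tfrac12$ and $q\ge\tfrac12$ restrictions stated in the lemma.

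\textbf{Main obstacle.} The main difficulty is bookkeeping. One has to verify carefully that $I$ is nonempty and identify which endpoint is binding in each regime, including the boundary values $q\in\{0,1\}$ and $c\in\{0,1\}$. Step 1's scaling argument needs $c>0$, which is why $c=0$ is handled separately above.
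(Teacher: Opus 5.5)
Your proof is correct and follows essentially the same route as the paper. Both arguments first use a scaling argument (with $c=0$ handled separately) to show that the constraint $A+B\ge c$ is active at any minimizer. Both then minimize $A^2+(c-A)^2$ along the line by projecting $c/2$ onto the admissible segment; your explicit interval $I$ and endpoint checks are slightly more careful than the paper's ``saturate $A=q$'' step.
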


\begin{proof}
Set
\[
c\defeq 1-2\tau\in[0,1].
\]
We must minimize \(A^2+B^2\) over
\[
|A|\le q,
\qquad
|B|\le 1-q,
\qquad
A+B\ge c.
\]
We first show that any minimizer must satisfy \(A+B=c\). If \(c=0\), then the feasible point \((0,0)\) has squared norm \(0\), and hence every minimizer satisfies \(A=B=0\), so in particular \(A+B=c\). We may therefore assume \(c>0\). Let \((A,B)\) be feasible with \(A+B>c\), and define
\[
\lambda\defeq \frac{c}{A+B}\in(0,1).
\]
Then \((\lambda A,\lambda B)\) is still feasible, because
\[
|\lambda A|\le |A|\le q,
\qquad
|\lambda B|\le |B|\le 1-q,
\qquad
\lambda A+\lambda B=c,
\]
and its squared norm is strictly smaller:
\[
(\lambda A)^2+(\lambda B)^2=\lambda^2(A^2+B^2)<A^2+B^2.
\]
Therefore no minimizer can satisfy \(A+B>c\).
Hence we minimize
\[
A^2+(c-A)^2
\]
over the intersection of the line \(A+B=c\) with the rectangle
\[
[-q,q]\times[-(1-q),1-q].
\]
Without box constraints, the minimum is attained at
\[
A=B=\frac c2=\frac{1-2\tau}{2}.
\]
This point lies in the rectangle if and only if
\[
\frac{1-2\tau}{2}\le q
\qquad\text{and}\qquad
\frac{1-2\tau}{2}\le 1-q,
\]
that is,
\[
\left|q-\frac12\right|\le \tau.
\]
In this case,
\[
d_*(q,\tau)=\frac{1-2\tau}{\sqrt2}.
\]
Suppose now that \(q\le 1/2\) and \(\tau<1/2-q\). Then
\[
\frac{1-2\tau}{2}>q,
\]
so the unconstrained minimizer lies to the right of the admissible segment. Hence the constrained minimizer is obtained by saturating \(A=q\). Since \(A+B=c\), this gives
\[
B=c-q=1-q-2\tau.
\]
Therefore
\[
d_*(q,\tau)=\sqrt{q^2+(1-q-2\tau)^2}.
\]
Finally, suppose that \(q\ge 1/2\) and \(\tau<q-1/2\). Then
\[
\frac{1-2\tau}{2}>1-q,
\]
so the unconstrained minimizer lies above the admissible segment. Hence the constrained minimizer is obtained by saturating \(B=1-q\). Since \(A+B=c\), this gives
\[
A=c-(1-q)=q-2\tau.
\]
Therefore
\[
d_*(q,\tau)=\sqrt{(q-2\tau)^2+(1-q)^2}.
\]
This proves the piecewise expression for \(d_*(q,\tau)\).
\end{proof}

\newpage

\bibliographystyle{sty/IEEEtran}
\bibliography{bibliography/CosetCdsFor3CQChnls}

\end{document}